\newtheorem{theorem}{Theorem}
\newtheorem{lemma}{Lemma}
\newtheorem{proposition}{Proposition}
\newtheorem{remark}{Remark}
\newtheorem{definition}{Definition}
\newtheorem*{rep@theorem}{\rep@title}
\newcommand{\newreptheorem}[2]{%
\newenvironment{rep#1}[1]{%
 \def\rep@title{#2 \ref{##1}}%
 \begin{rep@theorem}}%
 {\end{rep@theorem}}}
\newcommand{\bignumber}{\mbox{\normalfont\Large\bfseries $\dfrac{2}{|\delta V|}$}}
\newcommand{\varEpsilon}{\mbox{\normalfont\Large\bfseries $\varepsilon$}}
\title{
A comfortable graph structure for Grover walk 
}
\author{
Yusuke Higuchi\thanks{Department of Mathematics,
Gakushuin University,
Tokyo 171-8588, Japan},
Mohamed Sabri\thanks{Graduate School of Information Sciences, Tohoku University, Aoba, Sendai 980-8579, Japan
} and  
 Etsuo Segawa\thanks{Graduate School of Environment and Information Sciences, Yokohama National University, Hodogaya, Yokohama 240-8501, Japan 
 } 

 }
 \date{}
\begin{document}
\maketitle

 \noindent {\bf Abstract.} 
We consider a Grover walk model on a finite internal graph,
which is connected with a finite number of
semi-infinite length paths and receives the alternative inflows along
these paths at each time step.
After the long time scale, we know that the behavior of such a Grover walk
should be stable, that is, this model has a stationary state.
In this paper our objectives are to give
some characterization upon the scattering of the stationary state
on the surface of the internal graph
and upon the energy of this state in the interior. 
For the scattering, we concretely give a scattering matrix,
whose form is changed depending on whether the internal graph is bipartite or not. On the other hand, we introduce a comfortability function of a graph
for the quantum walk,
which shows how many quantum walkers can stay in the interior,
and we succeed in showing
the comfortability of the walker in terms of combinatorial
properties of the internal graph. \\
%
%

\par
\noindent{\bf Keywords:} Quantum walk, (pseudo-)Kirchhoff's laws,  (signless-)Laplacian, Comfortability

\section{Introduction}
As discrete-time quantum walks on graphs are being studied, many interesting behaviors of quantum walks, which cannot be observed in the classical random walk setting, have become apparent: the accomplishment of quantum speed up in quantum algorithm, linear spreading, localization, periodicity, pseudo-perfect state transfer and so on (see \cite{Konno, Manouchehri:Wang,Portugal} and  references therein). 
Among them it is revealed that behaviors of quantum walks are closely related to geometric features of graphs; for examples, cycle geometry of graphs gives the localization~\cite{HKSS,HS2}, 
a three-edge-coloring induces an  eigenfunction of some quantum walks~\cite{MaresNovotonyJex}, 
and the rotation systems and 1-factorizations of graphs are reflected on the mixing time of quantum walks~\cite{GodsilZhang}. 
For a family of graphs with fractal structure, the spectral analysis reflects on its property for the hopping matrix~\cite{DomanyEtAl} and the continuous-time quantum walk~\cite{Boe} are studied. 
In \cite{FG}, the scattering for a tree with two semi-infinite lines appended is discussed under the time evolution of continuous-time quantum walk to investigate the quantum exponentially speed up of the hitting time to the marked vertex.  
In this paper, we observe what property of general graphs are extracted from the structure of the stationary state of discrete-time quantum walks. The stationary state, which is a kind of the long time behavior of quantum walk, has been discussed in \cite{Feldman:Hillery:2005,Feldman:Hillery:2007,Higuchi:Segawa}.


Let us explain our setting. 
For a connected and locally finite graph $G=(V,E)$, which may be infinite, let us define the set of symmetric arcs $A$ induced by $E$ as follows: for any undirected edge $e \in E$ with end vertices $u,v \in V$, the induced arcs are $a$ and $\overline{a}$, which are arcs from $u$ to $v$ and $v$ to $u$ along the edge $e \in E$, respectively. The terminus and origin vertices of $a \in A$ are denoted by $t(a), \, o(a) \in V$, respectively.
The total space of quantum walk treated here is denoted by $\mathbb{C}^{A}$, which is a vector space whose standard basis are labeled by $A$. In other words, $\mathbb{C}^{A}\cong \{\psi:A\to \mathbb{C}\}$ is spanned by $\left\lbrace \delta_{a} : a \in A \right\rbrace$, where $\delta_{a}(\cdot)$ is the delta function such that
$$
\delta_{a}(a')= \begin{cases}
1 & \text{if } a = a',\\
0 & \text{otherwise}.
\end{cases}
$$
The time evolution operator of a quantum walk $U:\mathbb{C}^{A}\to \mathbb{C}^A$ is defined by
$$U=SC.$$
Here $S \delta_{a} = \delta_{\overline{a}}$ and $C= \bigoplus_{v \in V} C_{v}$, where $C_{v}$ is a local coin operator assigned at vertex $v$ which is a $\text{deg}(v)$-dimensional unitary matrix on $\text{span} \left\lbrace \delta_{a} : t(a) =v \right\rbrace$.
Note that since the time evolution operator $U$ is unitary in the sense that $UU^*=I$, then for any $\psi\in \mathbb{C}^A$, the $l^2$ norm is preserved, that is, $||U\psi||^2=||\psi||^2=\sum_{a\in A}|\psi(a)|^2$ for the standard inner product.   
Let $\psi_n$ be the $n$-th iteration of $U$ with the initial state $\psi_0$; that is, $\psi_{n+1}=U\psi_n$. We call  
$\psi_\infty(a):=\lim_{n\to\infty}\psi_n(a)$ for any $a\in A$, if it exists, the stationary state.
Another type of stationarity of quantum walks, the stationary {\it measure}, is discussed in \cite{KonnoTakei},
whose form is as follows:  
$\mu(\psi_{n+1})(u):=\sum_{t(a)=u}|\psi_{n+1}(a)|^2=\sum_{t(a)=u}|\psi_{n}(a)|^2$ for any time step $n=0,1,2,\dots$ and vertex $u\in V$. 

We are interested in how the state converges to the stationary state as a fixed point of a dynamical system. 
In general, the state of the walker in a quantum walk on a finite graph $G_0=(V_0,A_0)$ does not necessarily converge because the time evolution operator is unitary and hence the absolute values of the eigenvalues of the operator are equal to $1$. 
Then we connect semi-infinite paths, say tails, to a subset of vertices of $\delta V\subset V_0$ in the finite graph $G_0$, which is called the surface. Moreover the $l^\infty$-initial state $\Phi_0$ is set so that a quantum walker along the tails come into the internal and a quantum walker goes out from that at every time step. See (\ref{eq:initial_state}) for the detailed definition of the initial state and also Fig.~\ref{fig:evolution}. 
It is expected that the outflow balances the inflow in the long time limit and this quantum walk on this graph goes to a steady state.
Indeed the convergence of states, in this sense, is shown in \cite{Feldman:Hillery:2005, Feldman:Hillery:2007, Higuchi:Segawa}. 
In this setup, the state space of the walk is described by $l^{\infty}$-space. 
This quantum walk always converges to a stationary state whenever we choose vertices of the internal, to which the tails are connected. 

In \cite{Higuchi:Sabri:Segawa,Higuchi:Segawa}, the Grover walk whose quantum coins are expressed by the Grover matrices $\left\lbrace \mathrm{Gr}(\text{deg}(u) \right\rbrace_{u \in V}$ was applied with the constant inflow. Then the scattering on the surface $\delta V$, which is a set of vertices connected to tails, in the long time limit recovers the local scattering at each vertex; that is, the scattering matrix is $\mathrm{Gr}(r)$, where $r$ is the number of tails. 
This implies that we can obtain the global scattering by only some information on the surface, especially the number $r$, whereas this also implies that we cannot obtain any information on the structure of the internal graph $G_{0}$. 

Our motivation is to investigate how the geometrical structure of the internal graph affects the behavior of quantum walkers in terms of their stationary states. For this purpose, we must change something in our setting; for example, we replace a constant flow, which was studied in \cite{Higuchi:Sabri:Segawa,Higuchi:Segawa}, with some oscillated one.
In this study, as a first trial, we focus on the simplest oscillation of the inflow with alternating signs which is a coarse graining of alternating current input. 
Our interest is to find a graph geometry from the scattering on the surface and also the energy in the interior in the long time limit. 
See Figure~\ref{fig:evolution}. 
\begin{figure}[htb]
    \centering
    \includegraphics[width=12cm]{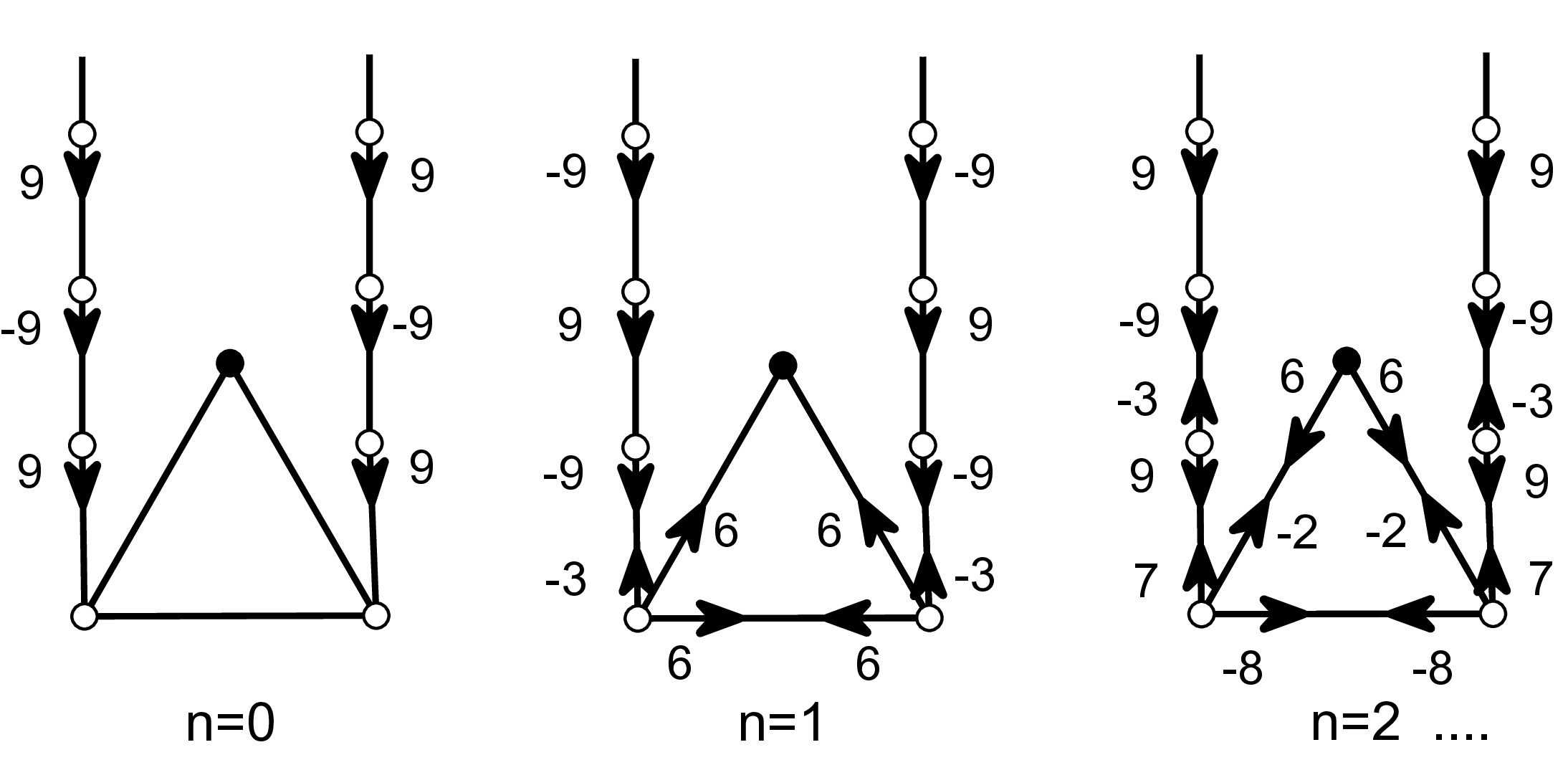}
    \caption{The time evolution of the quantum walk with the alternative inflow: The surface of the internal graph is the set of vertices in $K_3$ connected to the tails. At time $n=0$, we set the initial state $\Phi_0$ with alternating signs on the tails represented by $\boldsymbol{\alpha}=[\alpha_1,\alpha_2]^{\top}=[9,9]^\top$. The time evolution is determined by the Grover matrices assigned at each vertex: for each vertex $u$ and each time step, the transmitting weight is $2/\deg(u)$ while the reflection weight is $2/\deg(u)-1$. Then on the tails, the dynamics is free because $\deg(u)=2$ while in the internal graph, the time evolution is not trivial. For example, see the middle of the figure, at time $n=1$ and at the vertex connecting to the left tail, say $u_*$, a quantum walker who came from the tail with the amplitude ``$9$" at the previous time is transmitted to the internal graph with the amplitude $9\times 2/\deg(u_*)=9\times 2/3=6$, while reflected to the tails with the amplitude $9\times (2/\deg(u_*)-1)=-3$. Because of the setting of the initial state and free dynamics on the tails, the internal graph receives the inflow from the tails and also radiates the outflow to the tails at every time step.} 
    \label{fig:evolution}
\end{figure}

The scattering on the surface gives how the inflows map to the outflows on the surface. 
This answer is given in following.


\begin{theorem}
\label{main1}
Let $\bm{\alpha}$ and $\bm{\beta}$ represent the inflow and outflow of the stationary state on the surface $\delta V$. 
Their details can be seen in Definition \ref{def:scattering}. 
Then we have 
\[ \bm{\beta}=\sigma \bm{\alpha}, \]
where the scattering matrix $\sigma$ is unitary and described by 
\[ \sigma = 
\begin{cases}
I & \text{: $G_{0}$ is non-bipartite,}\\
\tau & \text{: $G_{0}$ is bipartite.}
\end{cases} \]
Here $I$ is the identity matrix and $\tau$ is described as in Section~3. 
\end{theorem}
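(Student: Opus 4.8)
The plan is to realise the stationary state as a generalised eigenfunction of the walk at the spectral value $-1$, reduce the defining equations to a finite linear system on the internal graph $G_{0}$, and then read off the scattering from a resolvent formula in which the bipartiteness of $G_{0}$ appears directly. Because the inflow alternates in sign with period two, the free dynamics along the tails pins down the stationary state (up to the period-two oscillation recorded in Definition~\ref{def:scattering}): on the $j$-th tail it must be the superposition of the incoming wave $(-1)^{\mathrm{dist}(\cdot,\delta V)}\alpha_{j}$ and the outgoing wave $(-1)^{\mathrm{dist}(\cdot,\delta V)}\beta_{j}$, and it must satisfy $U\psi_{\infty}=-\psi_{\infty}$ pointwise, the minus sign being forced by the alternation. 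Writing $U\psi_{\infty}=-\psi_{\infty}$ at every arc with origin in $V_{0}$ and eliminating the tail variables, the interior part $\psi_{0}:=\psi_{\infty}|_{A_{0}}$ is seen to solve an equation of the shape $(U_{0}+I)\psi_{0}=(\text{source on }\delta V,\text{ linear in }\bm{\alpha})$, where $U_{0}=S_{0}C_{0}$ is the Grover walk on $G_{0}$ and $d_{0}$ the normalised boundary operator, while $\bm{\beta}$ is recovered from $\bm{\alpha}$ together with the values $\{(d_{0}\psi_{0})(u):u\in\delta V\}$ by an explicit affine relation coming from the coins at the surface. Solving for $\psi_{0}$ and substituting, $\bm{\beta}$ becomes a Green's-function expression built from the resolvent of the discriminant $T_{0}=d_{0}S_{0}d_{0}^{*}$ (the normalised adjacency matrix) of $G_{0}$, evaluated at $-1$, regularised by the boundary self-energy of the tails.

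The geometry of $G_{0}$ enters through this resolvent. By the spectral-mapping correspondence for Grover walks, $-1$ sits in the spectrum of $T_{0}$ precisely when the signless Laplacian of $G_{0}$ is singular, i.e.\ precisely when $G_{0}$ is bipartite, the null eigenvector then being the sign function $\bm{\varepsilon}$ of a $2$-colouring. (The only other contribution to $\ker(U_{0}+I)$ is the space $\ker d_{0}\cap\ker(S_{0}-I)$ of symmetric pseudo-flows, but this is harmless for the scattering: the source lies in $S_{0}\,\mathrm{Im}(d_{0}^{*})$, orthogonal to it, and changing $\psi_{0}$ by such a flow is killed by $d_{0}$ and does not touch $\bm{\beta}$.) So everything comes down to whether the discriminant resolvent at $-1$ is regular or has a pole --- structurally the same dichotomy as in the constant-inflow analysis of \cite{Higuchi:Sabri:Segawa,Higuchi:Segawa}, where the ever-present pole at $+1$ (the constant eigenvector of a connected graph) is exactly what makes the scattering matrix equal to $\mathrm{Gr}(r)$.

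When $G_{0}$ is non-bipartite the resolvent at $-1$ is regular; the interior response is then an off-shell, closed channel carrying no net current between tails, and the Green's-function expression collapses to per-tail total reflection with amplitude $+1$, i.e.\ $\bm{\beta}=\bm{\alpha}$ and $\sigma=I$. When $G_{0}$ is bipartite the resolvent at $-1$ has a simple pole along $\bm{\varepsilon}$, and the tail self-energy cancels it in the same way it cancels the pole at $+1$ for constant inflow. The cleanest route to the answer is to conjugate $U_{0}$ by the diagonal sign operator attached to the $2$-colouring: this converts the eigenvalue $-1$ into the eigenvalue $+1$ for a Grover walk on the same graph, so the computation of \cite{Higuchi:Segawa} applies essentially verbatim, and transporting the outcome back through the conjugation yields $\mathrm{Gr}(r)$ twisted by the signs $\bm{\varepsilon}|_{\delta V}$ --- this is the matrix $\tau$ of Section~3. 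Unitarity of $\sigma$ is then clear: it is trivial for $I$ and inherited from the constant-inflow scattering matrix for $\tau$; intrinsically it also follows from conservation of the current carried by a generalised eigenfunction at a unimodular spectral value, which already forces $\|\bm{\beta}\|=\|\bm{\alpha}\|$.

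The main obstacle is the bipartite case: one has to treat the resonance at $-1$ properly --- showing the stationary state is well defined at this resonant value, controlling $\ker(U_{0}+I)$ on the nose rather than only modulo the inert pseudo-flows, and checking that the tail self-energy exactly cancels the pole so that the limit $\psi_{n}\to\psi_{\infty}$ persists --- and then verifying that the twisted surface data reproduce precisely the form of $\tau$ given in Section~3.
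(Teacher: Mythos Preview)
Your bipartite argument via the gauge transformation attached to the $2$-colouring is correct and is a genuinely different route from the paper's. The paper argues directly: it first shows (its Lemma~\ref{lemma}) that the quantity $\rho(u):=\Psi_\infty(a)-\Psi_\infty(\bar a)$ depends only on $u=o(a)$ and satisfies $\rho(u)=-\rho(v)$ whenever $u\sim v$; on a bipartite graph $\rho$ is therefore a constant $s$ on $X$ and $-s$ on $Y$, and the paper pins $s$ down from the norm identity $\|\Psi_\infty|_{A_{\rm in}}\|=\|\Psi_\infty|_{A_{\rm out}}\|$, reading off $\tau$ entry by entry. Your conjugation collapses all of this to the already-known $z=+1$ case: since $\tilde G$ is bipartite whenever $G_0$ is, the diagonal sign operator $\Theta$ satisfies $\Theta W\Theta^{-1}=-W$, so the $z=-1$ stationary problem with inflows $(\alpha_j)$ becomes the $z=+1$ problem with inflows $(\varepsilon(v_j^{(0)})\alpha_j)$, whose scattering is $\mathrm{Gr}(r)$ by \cite{Higuchi:Sabri:Segawa}; unwinding the conjugation at the boundary gives precisely the sign-twisted Grover matrix of Section~3. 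This is shorter and more conceptual than the paper's computation.

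Your non-bipartite case, however, has a real gap. You assert that regularity of the discriminant resolvent at $-1$ makes the scattering ``collapse to per-tail total reflection with amplitude $+1$'', but that is the conclusion to be proved, and off-resonance scattering is not generically trivial --- even granting total reflection on physical grounds, that would only give a diagonal unitary, not the identity with phase $+1$ in every slot. Nothing in the resolvent framework you set up forces the boundary data $\{(d_0\psi_0)(u):u\in\delta V\}$ to take the specific values that yield $\beta_j=\alpha_j$ for every $j$; you would still have to compute them, and you give no mechanism for why the answer is independent of the internal structure of $G_0$. The paper supplies exactly the missing mechanism, and it is specific to the Grover coin: because the local coin at $u$ is $\frac{2}{\deg u}J-I$, the stationary equation forces $\Psi_\infty(a)-\Psi_\infty(\bar a)$ to be \emph{the same} for every arc $a$ with $o(a)=u$, producing a vertex function $\rho$ with $\rho(u)=-\rho(v)$ along every edge; a single odd cycle then forces $\rho\equiv 0$, and $\beta_j-\alpha_j=\rho(v_j^{(0)})=0$ follows in one line. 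Your outline contains no analogue of this vertex-function observation, so the non-bipartite half is not yet a proof.
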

\noindent From this theorem, if the interior is non-bipartite, the scattering is the perfect reflection, while if the interior is bipartite, the scattering is described by the Grover matrix.
Then in contrast with results in \cite{Higuchi:Sabri:Segawa,Higuchi:Segawa}, we succeed in extracting a structure of the internal graph from the scattering on the surface; that is, the bipartiteness. 

Next it is natural to ask what happens to the interior. To answer this question, we introduce the idea of comfortability. The comfortability is the function of the interior, and gives how quantum walkers accumulate in the internal graph in the long time limit, which is the energy stored the interior. 
The notion of the comfortability for quantum walk is an analogy of the energy in the electric circuit~\cite{DS}. 
The detailed definition of the comfortability 
$\mathcal{E}_{QW}(G_0)$ is described in Definition~\ref{def:comfortability}.
We obtain that when only two tails are connected, the comfortability of the graph can be expressed in terms of the geometric information of the graph as follows. 
\begin{theorem}
\label{thm:comfortability}
Assume the number of tails is $2$, and the inflow $\bm{\alpha}=[\alpha_1,\alpha_2]^\top=[1,0]^\top$. Then the comfortability $\varEpsilon_{QW}(G_{0})$ of the quantum walk is given by
$$
\varEpsilon_{QW}(G_{0})=
\begin{cases}
\left(\chi_{2}(G_{0})/\chi_{1}(G_{0})+ |E_{0}|\right)/4  & \text{ : } G_{0} \text{ is bipartite,}\\
\iota_{2}(G_{0})/\iota_{1}(G_{0}) & \text{ : } G_{0} \text{ is non-bipartite.}
\end{cases}
$$
Here $|E_0|$ is the number of edges of $G_0$,  $\chi_1(G_0)$ is the number of the spanning trees of $G_0$ and $\chi_2(G_0)$ is the number of the spanning forests of $G_0$ with two components in which one contains $u_1$ and the other contains $u_n$. The geometric quantities of $\iota_1(G_0)$ and $\iota_2(G_0)$ are defined in Section~3. 
\end{theorem}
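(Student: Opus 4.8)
The plan is to set up the stationary state equation $U\Psi_\infty = \omega \Psi_\infty$ explicitly (here the alternating inflow forces the relevant frequency $\omega = -1$), reduce it to a boundary-value problem on the finite internal graph $G_0$, and then recognize the resulting linear system as one governed by a discrete Laplacian (bipartite case) or signless Laplacian (non-bipartite case). Concretely, I would first recall from the proof of Theorem~\ref{main1} the description of $\Psi_\infty$ restricted to the internal arcs in terms of a scalar function on vertices (or on edges): the Grover coin structure means the stationary state is, up to the known sign pattern, determined by a harmonic-type function $f$ solving a system $L f = b$, where $L$ is the (signless-)Laplacian of $G_0$ and $b$ is a current source supported on the surface $\delta V = \{u_1, u_n\}$ dictated by $\bm{\alpha} = [1,0]^\top$. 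This is the quantum-walk analogue of injecting unit current at $u_1$ and extracting it at $u_n$ in an electric network.

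Next I would write the comfortability $\mathcal{E}_{QW}(G_0)$ from Definition~\ref{def:comfortability} — the squared $\ell^2$-mass of $\Psi_\infty$ accumulated on the internal arcs — in terms of $f$. Using the explicit Grover relations between arc amplitudes and the vertex/edge function, $\|\Psi_\infty\restriction_{\text{internal}}\|^2$ becomes a quadratic form in $f$, which after using $Lf=b$ collapses to something like $\langle b, f\rangle$ plus a term counting edges (the $|E_0|/4$ in the bipartite case should come from the reflection/transmission bookkeeping of the coin, i.e. the part of the mass that is "pass-through" rather than stored). The value $\langle b, f\rangle$ is the effective resistance-type quantity $f(u_1)-f(u_n)$, which by the matrix-tree theorem (Kirchhoff) equals $\chi_2(G_0)/\chi_1(G_0)$ in the Laplacian case. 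For the non-bipartite case the same computation runs with the signless Laplacian $Q = D + A$ in place of $L = D - A$; since $Q$ is invertible exactly when $G_0$ is non-bipartite, the solution $f = Q^{-1}b$ exists, and one defines $\iota_1(G_0) = \det Q$ (or the appropriate cofactor normalization) and $\iota_2(G_0)$ as the corresponding numerator, giving $\iota_2/\iota_1$; here there is no extra $|E_0|/4$ term because the sign structure of the non-bipartite stationary state kills the pass-through contribution (consistent with $\sigma = I$, perfect reflection, in Theorem~\ref{main1}).

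The two cases must be handled separately from the start because the solvability of the boundary-value problem differs: in the bipartite case $L$ has a kernel (constants, or the $\pm1$ bipartition function), so $Lf = b$ is solvable only because $b$ lies in the correct subspace, and $f$ is determined only up to that kernel — one must check the comfortability is independent of the choice, which it is because it depends on $f$ only through differences and through $b$. In the non-bipartite case $Q$ is nonsingular and everything is immediate. So the concrete steps are: (1) derive $Lf=b$ (resp. $Qf=b$) from $U\Psi_\infty = -\Psi_\infty$ via the Grover coin; (2) express $\mathcal{E}_{QW}(G_0)$ as a quadratic form in $f$ and simplify using the equation; (3) identify the resulting scalar with $f(u_1)-f(u_n)$ plus the edge-count correction; (4) invoke the (all-minors) matrix-tree theorem to rewrite this ratio of determinants as $\chi_2/\chi_1$, and define $\iota_1,\iota_2$ analogously for the signless case.

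I expect the main obstacle to be step (2)–(3): carefully tracking how the full arc-space norm of the stationary state decomposes into a "stored energy" piece that is genuinely $\langle b,f\rangle$-like and a combinatorial piece that accounts for the $|E_0|/4$, and making sure the normalization conventions (factors of $2/\deg$, the choice of $\omega=-1$, the precise definition of inflow/outflow amplitudes in Definition~\ref{def:scattering}) are consistent so that the matrix-tree theorem applies with the stated normalization. The electric-network analogy from \cite{DS} is the right guide, but the quantum walk has extra internal degrees of freedom (two arcs per edge, the coin phases) that do not appear in a resistor network, and reconciling those is where the real work lies; the non-bipartite case additionally requires verifying that the signless Laplacian is the correct operator, which should follow from the sign flip in the coin action along odd closed walks.
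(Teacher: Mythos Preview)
Your outline matches the paper's proof almost exactly: reduce the stationary equation to a (signless-)Laplacian boundary problem $L\phi=-q$ (resp.\ $Q\phi=-q$) via a potential function, write the comfortability as a $\langle q,\phi\rangle$-type quadratic form, and then use Cramer's rule together with Binet--Cauchy (your ``all-minors matrix-tree theorem'') to extract the combinatorial ratio $\chi_2/\chi_1$ or $\iota_2/\iota_1$.

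The one place where your account has a genuine gap is the mechanism for the $|E_0|/4$ term, which you flag yourself. It is \emph{not} reflection/transmission bookkeeping of the coin. In the bipartite case the sign-twisted state $(-1)^{f(a)}\Psi_\infty(a)$ is not yet a current: it fails $j(a)=-j(\bar a)$. One must subtract a constant $\rho$ (equal here to $\tfrac{1}{|\delta V|}\bigl(\sum_{a\in A^X_{in}}\Psi_\infty(a)-\sum_{a\in A^Y_{in}}\Psi_\infty(a)\bigr)=1/2$) to obtain the genuine current $j(a)=(-1)^{f(a)}\Psi_\infty(a)-\rho$ satisfying both Kirchhoff laws. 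Then $|\Psi_\infty(a)|^2=|j(a)+\rho|^2$; the cross term $\sum_{a\in A_0} j(a)$ vanishes by the current law, and the constant term $\rho^2$ summed over the $2|E_0|$ arcs and halved gives exactly $\rho^2|E_0|=|E_0|/4$. In the non-bipartite case no shift is needed because the proof of Theorem~\ref{main1} already forces $\Psi_\infty(a)=\Psi_\infty(\bar a)$, so $\Psi_\infty$ itself is the pseudo-current and $\mathcal{E}_{QW}=\langle Q^{-1}q,q\rangle$ directly---that, rather than any odd-cycle sign cancellation, is why the edge term is absent there. Once you build this constant shift into your step~(1), your steps~(2)--(4) go through as written.
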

Under the condition where the outflow balances the inflow whose amplitude is constant, we can control the amount of energy stored in the interior by tuning the oscillation of the inflow according to the geometric structure. 
We illustrate the ranking of the comfortability among all of the graphs with four vertices in Section~\ref{Example}. 
To show the above theorem, we first define a (pseudo-)current function~\cite{Higuchi:Sabri:Segawa, Higuchi:Segawa} on a (non-)bipartite graph and we show that these functions satisfy the (pseudo-)Kirchhoff laws in Theorems~\ref{main2} and \ref{thm:pKL}, respectively. Secondly, we obtain a potential function with respect to these (pseudo-)current functions, in terms of (signless-)Laplacian matrix in \cite{Biggs, Bollobas} and Theorem~\ref{signlesslaplacian}, respectively. Thirdly, using these expressions of the potential functions, we obtain expressions for the comfortability in terms of the (non-)oriented incident matrix and the (signless-)Laplacian matrix, when the graph is (non-)bipartite in Propositions~\ref{prop} and \ref{prop:2}. 
Finally, using the speciality of the setting, we obtain the expression of the comfortability using the graph factors induced quantum walks defined by Definition~\ref{def:subgraphs}. 

We arrange the rest of this paper in the following way. In section 2, we introduce the setting, notations and some previously known results in our model. Moreover, we define the scattering matrix of the walk and a comfortability function of the graph. In section 3, we introduce our main results of the paper: Theorems~\ref{main1} and \ref{thm:comfortability} which tells information on the scattering matrix and the comfortability function, respectively. These theorems depends on the bipartiteness of graphs. Furthermore, we note that the theorem on the comfortability gives a method to compute the comfortability of the graph in terms of the geometric information of the graph. In section 4, we give an example to illustrate the comfortability of the graphs with 4 vertices and we show the comparison of the comfortabilities. In section 5, we give the proofs of our main theorems, Theorems~\ref{main1} and \ref{thm:comfortability}. 


\section{Setting}
\label{Setting}
Let $G_{0}=(V_{0},E_{0})$ be a finite connected graph and $A_{0}$ be the symmetric arc set induced by $E_{0}$. We choose the boundary(surface) of $G_{0}$, $\emptyset \neq \delta V \subset V_{0}$ with $\delta V = \{ v_{1}^{(0)},...,v_{r}^{(0)}\}$, where $v_i^{(0)}\neq v_j^{(0)}$ if and only if $i \neq j$. 
Let $\left\lbrace \mathbb{P}_{j}:j=1,...,r \right\rbrace$ be the set of semi-infinite length paths called the tails each end vertex of which is denoted by $v_{j}^{(0)}$, and each of which is connected to the finite graph $G_{0}$ such that $V(\mathbb{P}_{j})=\left\lbrace v_{j}^{(0)} \sim v_{j}^{(1)} \sim v_{j}^{(2)} \sim ... \right\rbrace$.  Here $u \sim v$ means that the vertices $u$ and $v$ are adjacent. We denote the constructed graph by $\tilde{G}=(\tilde{V},\tilde{E})$. We also denote the arc set induced by $E_{0}$ and $\tilde{E}$ by $A_{0}$ and $\tilde{A}$ respectively. For any arc $a =(u,v) \in \tilde{A}$, we write $\overline{a} = (v,u)$, $o(a)=u$ and $t(a)=v$.
Remark that $o(\overline{a})=t(a)$ and $t(\overline{a})=o(a)$. 

For a discrete set $\Omega$, let us define $\mathbb{C}^{\Omega}$ by the vector space whose basis are labeled by each element of $\Omega$.
The total state space associated with the quantum walk treated here is $\mathbb{C}^{\tilde{A}}$. 
We define the time evolution operator $W$ on $\mathbb{C}^{\tilde{A}}$ in the matrix form by 
\[ (W)_{a,b}= \begin{cases} \left(\dfrac{2}{\text{deg}(o(a))}-\delta_{a \overline{b}} \right) & \text{ if } \, o(a)=t(b),\\
0 & \text{otherwise},
\end{cases} \]
which is so called the Grover walk. 
Note that the walk becomes ``free" on the tails; that is, 
\[ (W)_{a,b}= \begin{cases} 1 & \text{: $o(a)=t(b)$, $a\neq \overline{b}$,} \\
0 & \text{: otherwise}
\end{cases} \]
for any $o(a)\notin V_0$. 
We set the $l^\infty$-initial state by using a complex value $z$ with $|z|=1$:  
\begin{equation}\label{eq:initial_state}
\Phi_{0}(a)=
\begin{cases}
z^{-\text{dist}(v_{j}^{(0)},\;t(a))}\alpha_{j} & \text{if } o(a) = v_{j}^{(s+1)}, \, t(a) = v_{j}^{(s)}, \, s=0,1,2,..., \, j = 1, 2, ..., r,\\\\
0 & \text{otherwise}.
\end{cases}
\end{equation}
Here $\alpha_j\in \mathbb{C}$ $(j=1,\dots,r)$ and $\mathrm{dist}(u,v)$ is the shortest length of paths in $\tilde{G}$ between vertices $u$ and $v$. 
Then quantum walkers inflows into the internal graph $G_0$ at every time step $n$ from the tails. On the other hand, a quantum walker outflows towards the tails from the internal graph.

Let $\Phi_n\in \mathbb{C}^{\tilde{A}}$ be the $n$-th iteration of the quantum walk such that $\Phi_{n+1}=W\Phi_n$. Since the inflow oscillates with respect to the time step, the total state does not converge in the long time limit. We put $\Phi_n':=z^{n}\Phi_n$, which satisfies $\Phi_{n+1}'=zW\Phi_n'$. 
The convergence of $\Phi_{n}'$ is ensured by \cite{Higuchi:Segawa} as follows.  
\begin{theorem}[\cite{Higuchi:Segawa}]
 $\Phi_{\infty}' := \lim \limits_{n \rightarrow \infty}\Phi_{n}'$ exists; that is, $W\Phi_\infty'=z^{-1}\Phi_\infty'$.
\end{theorem}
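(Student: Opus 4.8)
\ The plan is to write $\Phi_n'=(zW)^n\Phi_0$ and show that this sequence converges arc by arc to a vector fixed by $zW$. The starting observation is a structural feature of the initial state: since $W$ is the free shift on each tail and all degrees along the tails are $2$, the state in \eqref{eq:initial_state} is \emph{purely incoming} --- $\Phi_0$ is carried only by the tail arcs $a$ with $o(a)=v_j^{(s+1)},\ t(a)=v_j^{(s)}$, where it equals $z^{-s}\alpha_j$. A one-step computation then shows that $\eta:=zW\Phi_0-\Phi_0$ is supported on the finitely many arcs issuing from the surface vertices $v_1^{(0)},\dots,v_r^{(0)}$, so $\eta$ has finite $l^2$-norm, and telescoping gives $\Phi_n'=\Phi_0+\sum_{k=0}^{n-1}(zW)^k\eta$. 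I would also record that $\eta=(zW-1)\Phi_0$ is orthogonal to every $l^2$-eigenvector of $W$: any such eigenvector must vanish on each tail (the free shift on a half-line has no $l^2$-eigenfunction), hence is supported inside $G_0$, disjointly from the support of $\Phi_0$.

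Second, I would construct the candidate limit $\Phi_\infty'$ by hand, as the stationary generalized eigenfunction. Solving $zW\psi=\psi$ along a tail forces $\psi$ there to be built from the prescribed incoming wave $z^{-s}\alpha_j$ and an unknown outgoing wave $z^{s}\beta_j$; inserting this into $zW\psi=\psi$ on the interior arcs and at the surface produces a finite \emph{square} linear system for the $2|E_0|$ interior amplitudes together with the $r$ outflow amplitudes $\beta_j$. The crucial point is that this system is solvable for every $\bm{\alpha}$. I would prove this by identifying its coefficient matrix with an effective resolvent of $W$ at $z^{-1}$ restricted to the surface --- equivalently, through the spectral correspondence between the Grover walk and the (signless-)Laplacian of $G_0$ that drives Theorems~\ref{main2} and \ref{signlesslaplacian} --- and by invoking connectedness of $G_0$ to rule out the obstructing degeneracy; when the homogeneous system does admit a nonzero interior solution (a bound state at $z^{-1}$), one selects the particular solution orthogonal to it. This yields $\Phi_\infty'$ with $zW\Phi_\infty'=\Phi_\infty'$ whose incoming part coincides with that of $\Phi_0$.

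Third, it remains to check that $(zW)^n(\Phi_0-\Phi_\infty')\to 0$ arc by arc. By construction $\rho:=\Phi_0-\Phi_\infty'$ decomposes as $\rho_{\mathrm{out}}+\rho_{\mathrm{int}}$, where $\rho_{\mathrm{out}}$ is the outgoing plane wave $-z^{s}\beta_j$ on the tails and $\rho_{\mathrm{int}}$ is finitely supported inside $G_0$. On the tails the dynamics for $\rho_{\mathrm{out}}$ is the outward free shift (times a phase $z$ per step): it clears the cells nearest the surface and never re-enters the interior, so $((zW)^n\rho_{\mathrm{out}})(a)=0$ once $n$ exceeds the distance of $a$ from the surface. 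As for $\rho_{\mathrm{int}}$, it is finitely supported inside $G_0$ and, being minus the interior part of $\Phi_\infty'$, is orthogonal to the point spectrum of $W$ for the same reason $\eta$ was; since these tail-decorated Grover operators have no singular continuous spectrum, $\rho_{\mathrm{int}}$ sits in the absolutely continuous subspace, and the Riemann--Lebesgue lemma (a RAGE-type statement) forces $\langle\delta_a,(zW)^n\rho_{\mathrm{int}}\rangle\to 0$ for every arc $a$. Adding the two pieces, $\Phi_n'=\Phi_\infty'+(zW)^n\rho\to\Phi_\infty'$ pointwise, which is the asserted identity $W\Phi_\infty'=z^{-1}\Phi_\infty'$.

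The hard part will be the solvability in the second step: establishing that the finite surface system --- i.e.\ the effective resolvent of $W$ at the unimodular value $z^{-1}$ --- is non-degenerate for every finite connected $G_0$ and every inflow, and identifying the correct limiting solution in the presence of a bound state at $z^{-1}$. This is where the combinatorics of $G_0$ enters and where the argument overlaps with the proof of Theorem~\ref{main1}. Once that is in hand, the decay of the transient part in the third step is comparatively soft, using only the standard spectral picture --- absence of singular continuous spectrum and point spectrum confined to $G_0$ --- for quantum walks that are finite-rank perturbations of free walks on half-lines.
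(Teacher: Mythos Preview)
The paper does not give its own proof of this statement: it is quoted verbatim from \cite{Higuchi:Segawa}, and the only hint at the method is the reduction in \eqref{eq:phi'} to the finite-dimensional iteration $\phi_{n+1}'=zE\phi_n'+\rho$ on $\mathbb{C}^{A_0}$ with $E=\chi W\chi^*$. The argument in \cite{Higuchi:Segawa} proceeds through that finite system: one shows that $\rho=\chi W\Phi_0$ is orthogonal to the unimodular eigenspaces of $zE$, so that on the cyclic subspace generated by $\rho$ the finite matrix $zE$ is a strict contraction and the geometric series $\sum_{k\ge0}(zE)^k\rho$ converges; the tail values of $\Phi_\infty'$ are then recovered from the interior limit. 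No infinite-dimensional spectral theory is used.

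Your route is genuinely different: a scattering-theoretic construction of the generalized eigenfunction $\Phi_\infty'$ followed by a decay argument for the transient $(zW)^n(\Phi_0-\Phi_\infty')$ on the full graph. The outgoing-wave part of your step~3 is fine, and the orthogonality of $\rho_{\mathrm{int}}$ to the $l^2$-point spectrum is correct once $\Phi_\infty'$ is chosen in the particular-solution sense you describe. What each approach buys: the finite-dimensional route of \cite{Higuchi:Segawa} is elementary and self-contained; your picture is more conceptual and explains \emph{why} the limit is the scattering state, but it imports heavier tools.

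There are two soft spots you should address. First, your solvability argument in step~2 appeals to ``the spectral correspondence \dots\ that drives Theorems~\ref{main2} and \ref{signlesslaplacian}''; those theorems are proved \emph{after} and \emph{using} the existence of $\Psi_\infty$, so as written this is circular --- you need an independent reason that the surface system is consistent at the unimodular value $z^{-1}$, and ``connectedness rules out the degeneracy'' is not enough when $z=\pm1$ and genuine bound states sit in $A_0$. Second, the decay of $\langle\delta_a,(zW)^n\rho_{\mathrm{int}}\rangle$ hinges on the assertion that $W$ has no singular continuous spectrum; this is plausible for a finite-rank perturbation of free half-line shifts, but it is a nontrivial input that you have not proved. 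A cleaner fix is to observe that the interior trace of your transient obeys exactly the finite recursion $(zE)^n(\cdot)$ and then reuse the contraction estimate on $E$ --- at which point you have essentially reproduced the argument of \cite{Higuchi:Segawa}.
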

Let us focus on the dynamics restricted to the internal graph.  
To this end, we define the boundary operator 
of $A_{0}$, $\chi: \mathbb{C}^{\tilde{A}} \rightarrow \mathbb{C}^{A_{0}}$ by,
$$(\chi \tilde{f})(a)=\tilde{f}(a), a \in A_{0}$$
for any $\tilde{f}\in \mathbb{C}^{\tilde{A}}$. 
The adjoint $\chi^{*} : \mathbb{C}^{A_{0}} \rightarrow \mathbb{C}^{\tilde{A}}$ is described by
$$(\chi^{*} f)(a) =
\begin{cases}
f(a) & \text{if} \, a \in A_{0},\\
0 & \text{otherwise}
\end{cases}
$$
for any $f\in \mathbb{C}^{A_0}$. 
Remark that $\chi \chi^{*} : \mathbb{C}^{A_{0}} \rightarrow \mathbb{C}^{A_{0}}$ is the identity operator on $\mathbb{C}^{A_{0}}$ and $\chi^{*} \chi : \mathbb{C}^{\tilde{A}} \rightarrow \mathbb{C}^{\tilde{A}}$ is the projection operator on $\mathbb{C}^{\tilde{A}}$ with respect to $A_{0}$. 
Putting $\chi\Phi_n'=:\phi_n'$ and $\chi W\chi^*=:E$, we have 
\begin{equation}\label{eq:phi'} 
\phi_{n+1}'=zE \phi_n'+\rho,\;\phi_0'=0, 
\end{equation}
where $\rho=\chi W\Phi_0$. 
The dynamical system given by (\ref{eq:phi'}) can be also accomplished by the restriction to the internal graph of the following alternating quantum walk for $z=-1$ case: 
the time evolution operator of the alternating quantum walk
$U:\mathbb{C}^{\tilde{A}} \rightarrow \mathbb{C}^{\tilde{A}}$ 
is defined in the matrix form by
\begin{equation}\label{eq:U}
(U)_{a,b}= \begin{cases}
(-1)^{\mathbbm{1}_{V_{0}}(o(a))} \left(\dfrac{2}{\text{deg}(o(a))}-\delta_{a \overline{b}} \right) & \text{ if } \, o(a)=t(b),\\
0 & \text{otherwise},
\end{cases}
\end{equation}
where $\mathbbm{1}_{V_{0}}$ is the characteristic function of $V_{0}$. 
The quantum coin assigned at every vertex in the internal graph is the ``signed" Grover matrix.  
The initial state of the walker is
\begin{equation}\label{eq:initial}
\Psi_{0}(a)= \begin{cases}
\alpha_{j} & \text{if} \, o(a)=v_{j}^{(s+1)}, \,  t(a)=v_{j}^{(s)},\, s=0,1,2,..., \, j=1,...,r,\\\\
0 & \text{otherwise}.
\end{cases}
\end{equation}
In this paper, we consider $\Psi_{n+1}=U\Psi_n$ instead of $\Phi_n$. Since 
$\Psi_\infty:=\lim_{n\to\infty}\Psi_n$ exists and $U\Psi_\infty=\Psi_\infty$ holds according to \cite{Higuchi:Segawa}, we compute  its stationary state. 
In particular, we focus on the following quantities. 
\begin{definition}\label{def:scattering}
Scattering matrix: 
Let $\bm{\alpha}:= \left[ \alpha_{1},\alpha_{2},...,\alpha_{r} \right]^{T}, \, \bm{\beta}:=\left[\beta_{1},\beta_{1},...,\beta_{r}\right]^{T}$, where $\beta_{j}=\Psi_{\infty}(a)$ with $o(a)=v_{j}^{(0)}, \, t(a)=v_{j}^{(1)}$. 
The scattering matrix $\sigma$, which is a $r$-dimensional unitary matrix, is defined by 
\[ \bm{\beta}=\sigma\bm{\alpha}. \]
\end{definition}
The existence of such a unitary matrix is ensured by \cite{Feldman:Hillery:2005,Feldman:Hillery:2007}. 
If $z=1$, the scattering matrix gives us only the information of the number of tails because the scattering matrix for $z=1$ is expressed by $\mathrm{Gr}(r)$~\cite{Higuchi:Sabri:Segawa}. In this paper, if $z=- 1$, we obtain the information on the internal graph in Theorem~1. 
We are also interested in the stationary state in the internal graph, especially how many quantum walkers exist; that is, how quantum walkers feel {\it comfortable} to the graph. 
\begin{definition}\label{def:comfortability}
The comfortability of $G_0$ with $\bm{\alpha}$ with  $\delta V$ for quantum walker : 
\[\mathcal{E}_{QW}(G_0;\bm{\alpha},\delta V)=\frac{1}{2}\sum_{a\in A_0}|\Psi_\infty(a)|^2. \]
\end{definition}
We extract some geometric graph structures from these quantities which derive from some quantum effects in Theorem~2.  


\section{Main Results}
\label{Results}
Now we state a main theorem concerning the scattering on the surface, which gives a quantum walk method to characterize bipartite graphs. For example, let us consider a graph with two tails and we set the inflow one of them. 
According to the following theorem, we can determine the bipartiteness of the graph from the scattering way: if the scattering is the perfect reflecting, then the graph is non-bipartite, while the scattering is the perfect transmitting, then the graph is bipartite. 

\begin{reptheorem}{main1}{\rm (Scattering on the surface)}
Assume the time evolution operator is described by (\ref{eq:U}). 
For the stationary state $\Psi_\infty$, let $\bm{\alpha}:= \left[ \alpha_{1},\alpha_{2},...,\alpha_{r} \right]^{T}$ and  $\bm{\beta}:=\left[\beta_{1},\beta_{1},...,\beta_{r}\right]^{T}$, where
$\alpha_j$ is the inflow described by (\ref{eq:initial}) and $\beta_j$ is the outflow described by 
$\beta_{j}=\Psi_{\infty}(a)$ with $o(a)=v_{j}^{(0)}, \, t(a)=v_{j}^{(1)}$. Then the scattering matrix, which is an $r$-dimensional unitary matrix; $\bm{\beta} = \sigma \bm{\alpha}$, is expressed as follows. 

\[ \sigma = 
\begin{cases}
I & \text{: $G_{0}$ is non-bipartite,}\\
\tau & \text{: $G_{0}$ is bipartite.}
\end{cases} \]
Here $I$ is the identity matrix and $\tau$ is described as follows: 
\[\tau=-\begin{bmatrix}I_{k} & 0 \\ 0 & -I_{r-k}\end{bmatrix} \mathrm{Gr}(r) \begin{bmatrix}I_{k} & 0 \\ 0 & -I_{r-k}\end{bmatrix}\]
with the computational basis labeled by $\left\lbrace v_1^{(0)},...,v_k^{(0)},v_{k+1}^{(0)},...,v_r^{(0)}\right\rbrace$, where $v_{1}^{(0)},...,v_{k}^{(0)} \in X \cap \delta V$ and $v_{k+1}^{(0)},...,v_{r}^{(0)} \in Y \cap \delta V$.
\end{reptheorem}
\begin{remark}
If we set the inflow to a bipartite graph with the partite sets $X$ and $Y$ so that total inflow to the partite set $X$ coincides with that to the partite set $Y$, the perfect reflecting happens. This means that we can not detect the bipartiteness of graph with such an initial state. 
These inputs vectors are described by eigenvectors of the scattering matrix $\tau$ with the eigenvalue $1$. 
\end{remark}
We express the total energy in the interior, comfortability, in terms of the combinatorial geometry of finite graphs.  
To state our claim, we prepare some notations and settings. 
The odd unicyclic graph is the graph which contains only one cycle whose length is odd. 
Now let us consider the case where two tails are connected to $G_{0}$ at $u_{1}$ and $u_{n}$ $(u_1\neq u_n)$, and set the inflow to $u_1$ and $u_n$ by $\bm{\alpha}=(\alpha_1,\alpha_2)=(1,0)$. 
Here $n$ is the number of vertices of $G_0$. 
In this case, the comfortability is described by 
$\mathcal{E}_{QW}(G_0;\bm{\alpha},\delta V)=:\mathcal{E}_{QW}(G_0;u_1,u_n)$. 

\begin{definition}\label{def:subgraphs}
{\rm (Important graph factors)}
Let $\chi_{1}(G_{0})$ be the number of spanning trees of $G_{0}$ and $\chi_{2}(G_{0};u_{1},u_{n})$ be the number of spanning forests of $G_{0}$ with exactly two components, one containing $u_{1}$ and the other containing $u_{n}$.
Here the isolate vertex is regarded as a tree. 
Define the set of odd unicyclic spanning subgraphs of $G_{0}$ by
$$
\mathcal{C}_o=\cup_{r\geq 1}\;\{\{C_{1},...,C_{r}\} \;|\; C_{i}\text{'s are odd unicyclic graphs}\}
$$
and the set of spanning subgraphs of $G_{0}$ whose one component is a tree $T$ which contains $u_{1}$ and the remaining components are odd unicyclic graphs by
$$
\mathcal{T}\mathcal{C}_o=\cup_{k\geq 0}\;\{\{T,C_{1},...,C_{k}\}\;|\; u_{1}\in T \text{ where } T \text{ is a tree and }C_{i}\text{'s are odd unicyclic graphs}\}.
$$
Here the situation $k=0$ is
the set of spanning trees of $G_{0}$.
See Figure~\ref{fig:subgraphs}. 
Now define the functions $\iota_{1}$ and $\iota_{2}$ by 
$$
\iota_{1}(G_{0})= \sum \limits _{H \in \mathcal{C}_o} 4^{\omega(H)}
$$
and 
$$
\iota_{2}(G_{0};u_{1})= \sum \limits _{H \in \mathcal{T}\mathcal{C}_o} 4^{\omega(H)-1},
$$
where $\omega(H)$ is the number of components in $H$.
\end{definition}

Then with the above notations, we have another main theorem of our paper as follows.

\begin{reptheorem}{thm:comfortability}{\rm (Comfortability in the interior)}
\label{power}
Assume the number of tails is $2$, and the inflow $\bm{\alpha}=(\alpha_1,\alpha_2)=(1,0)$ at $u_1$ and $u_n$, respectively.
Then the comfortability of the quantum walk (\ref{eq:U}) is given by
$$
\varEpsilon_{QW}(G_{0};u_1,u_n)=
\begin{cases}
\dfrac{1}{4}\left(\dfrac{\chi_{2}(G_{0};u_{1},u_{n})}{\chi_{1}(G_{0})}+ |E_{0}|\right)  & \text{ if } G_{0} \text{ is bipartite,}\\\\
\dfrac{\iota_{2}(G_{0};u_{1})}{\iota_{1}(G_{0})} & \text{ if } G_{0} \text{ is non-bipartite.}
\end{cases}
$$
\end{reptheorem}
Then we can determine how much quantum walker feels {\it comfortable} to the given graph by listing up the spanning subgraphs in Definition~\ref{def:subgraphs} of this graph. 
We will demonstrate it for the graphs with four vertices in the next section. 

\begin{theorem}\label{thm:4}
Assume the number of tails is $2$, and the inflow $\bm{\alpha}=(\alpha_1,\alpha_2)=(1,0)$ at $u_1$ and $u_n$, respectively.
Then the comfortability of the quantum walk discussed in \cite{Higuchi:Segawa}, that is, $z=1$ in (\ref{eq:phi'}), is given by
$$
\varEpsilon_{QW}(G_{0};u_1,u_n)=
\dfrac{1}{4}\left(\dfrac{\chi_{2}(G_{0};u_{1},u_{n})}{\chi_{1}(G_{0})}+ |E_{0}|\right). 
$$
\end{theorem}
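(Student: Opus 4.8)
\noindent\textbf{Proof plan for Theorem~\ref{thm:4}.} The plan is to reuse, essentially verbatim, the chain of arguments behind the bipartite branch of Theorem~\ref{thm:comfortability}, the point being that for $z=1$ the relevant current function is a \emph{genuine} electric current on $G_0$ for \emph{every} $G_0$, not merely a pseudo-current. Write $\Phi_\infty'$ for the $z=1$ stationary state, so that $W\Phi_\infty'=\Phi_\infty'$ for the (unsigned) Grover walk $W$ of \cite{Higuchi:Segawa} and $\mathcal{E}_{QW}(G_0;u_1,u_n)=\tfrac12\sum_{a\in A_0}|\Phi_\infty'(a)|^2$. First I would extract, from $W\Phi_\infty'=\Phi_\infty'$ rewritten as $C\Phi_\infty'=S\Phi_\infty'$, the identity
\[
\Phi_\infty'(a)+\Phi_\infty'(\overline a)=\frac{2}{\deg(t(a))}\sum_{t(b)=t(a)}\Phi_\infty'(b).
\]
Since the left side is invariant under $a\mapsto\overline a$ while the right side depends only on $t(a)$, the average $\tfrac1{\deg(v)}\sum_{t(b)=v}\Phi_\infty'(b)$ is independent of $v$; evaluating it along a tail and using the known scattering $\sigma=\mathrm{Gr}(r)$ of \cite{Higuchi:Sabri:Segawa} identifies this constant as $c=\tfrac1r\sum_j\alpha_j$, hence $c=\tfrac12$ when $\bm{\alpha}=(1,0)$ and
\[
\Phi_\infty'(a)+\Phi_\infty'(\overline a)=2c=1\qquad\text{for every arc }a.
\]

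Next, after orienting each $e\in E_0$ as $a_e$, I would set $J(e):=\Phi_\infty'(a_e)-\Phi_\infty'(\overline{a_e})$ (the current) and $S(e):=\Phi_\infty'(a_e)+\Phi_\infty'(\overline{a_e})=1$ (the symmetric part). The vertex identity above forces Kirchhoff's current law for $J$ at every interior vertex, with net inflow $-\alpha_1=-1$ at $u_1$ and $+\alpha_1=+1$ at $u_n$; I would then invoke the analysis of the $z=1$ stationary state in \cite{Higuchi:Sabri:Segawa,Higuchi:Segawa} (the $z=1$ analogue of Theorem~\ref{main2}) to conclude that $J$ additionally satisfies Kirchhoff's voltage law, i.e.\ $J$ is the unit electric current flow from $u_1$ to $u_n$ in $G_0$ with unit resistances, realised by a potential $g\colon V_0\to\mathbb{C}$ with $J(e)=g(o(a_e))-g(t(a_e))$ and $Lg=\bm{e}_{u_1}-\bm{e}_{u_n}$ for the Laplacian $L$ of $G_0$. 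This is exactly the construction entering Proposition~\ref{prop}.

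With these ingredients the computation is short. Using $|x+y|^2+|x-y|^2=2(|x|^2+|y|^2)$,
\[
\mathcal{E}_{QW}(G_0;u_1,u_n)=\tfrac12\sum_{a\in A_0}|\Phi_\infty'(a)|^2=\tfrac14\sum_{e\in E_0}\bigl(|S(e)|^2+|J(e)|^2\bigr).
\]
Here $\sum_{e\in E_0}|S(e)|^2=|E_0|$ because $|S(e)|=1$ for every edge, while $\sum_{e\in E_0}|J(e)|^2$ is the power dissipated by the unit current flow, which by summation by parts equals $g(u_1)-g(u_n)=R_{\mathrm{eff}}(u_1,u_n)$; and by the (all-minors) matrix--tree theorem (see \cite{Biggs,Bollobas}) one has $R_{\mathrm{eff}}(u_1,u_n)=\chi_2(G_0;u_1,u_n)/\chi_1(G_0)$. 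Substituting yields $\mathcal{E}_{QW}(G_0;u_1,u_n)=\tfrac14\bigl(\chi_2(G_0;u_1,u_n)/\chi_1(G_0)+|E_0|\bigr)$, which also explains why the answer coincides with the bipartite branch of Theorem~\ref{thm:comfortability}.

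The one genuinely non-routine step is the passage in the second paragraph: showing that the antisymmetric part $J$ of the Grover-walk stationary state obeys Kirchhoff's voltage law, not merely flow conservation, so that the entire electrical-network dictionary — effective resistance via the Laplacian, then via counts of spanning trees and spanning $2$-forests — becomes available. For $z=1$ this holds uniformly in $G_0$, in contrast with the $z=-1$ case where it is bipartiteness that rescues the genuine Kirchhoff laws (cf.\ Theorems~\ref{main2} and \ref{thm:pKL}); this is precisely why the $z=1$ comfortability lands on the ``bipartite'' formula for all $G_0$. Everything after that step is the same bookkeeping as in Proposition~\ref{prop} and in the proof of Theorem~\ref{thm:comfortability}.
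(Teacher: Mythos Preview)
Your proof is correct and follows essentially the same route the paper indicates (``essentially same as in the proof of Theorem~\ref{thm:comfortability} for the bipartite graph case''): extract the constant $\rho$ ($=c=1/2$), show the antisymmetric part of the stationary state is a genuine electric current via the Kirchhoff laws, split the comfortability as in~(\ref{eq:z1OK}), and identify the electrical energy with $\chi_2/\chi_1$. The only cosmetic difference is that you invoke the identity $R_{\mathrm{eff}}(u_1,u_n)=\chi_2/\chi_1$ directly from the all-minors matrix--tree theorem, whereas the paper rederives it via Cramer's rule and Binet--Cauchy in Proposition~\ref{prop}.
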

The proof of Theorem~\ref{thm:4} is essentially same as in the proof of Theorem~\ref{thm:comfortability} for the bipartite graph case. 

\begin{figure}
    \centering
    \includegraphics[width=12cm]{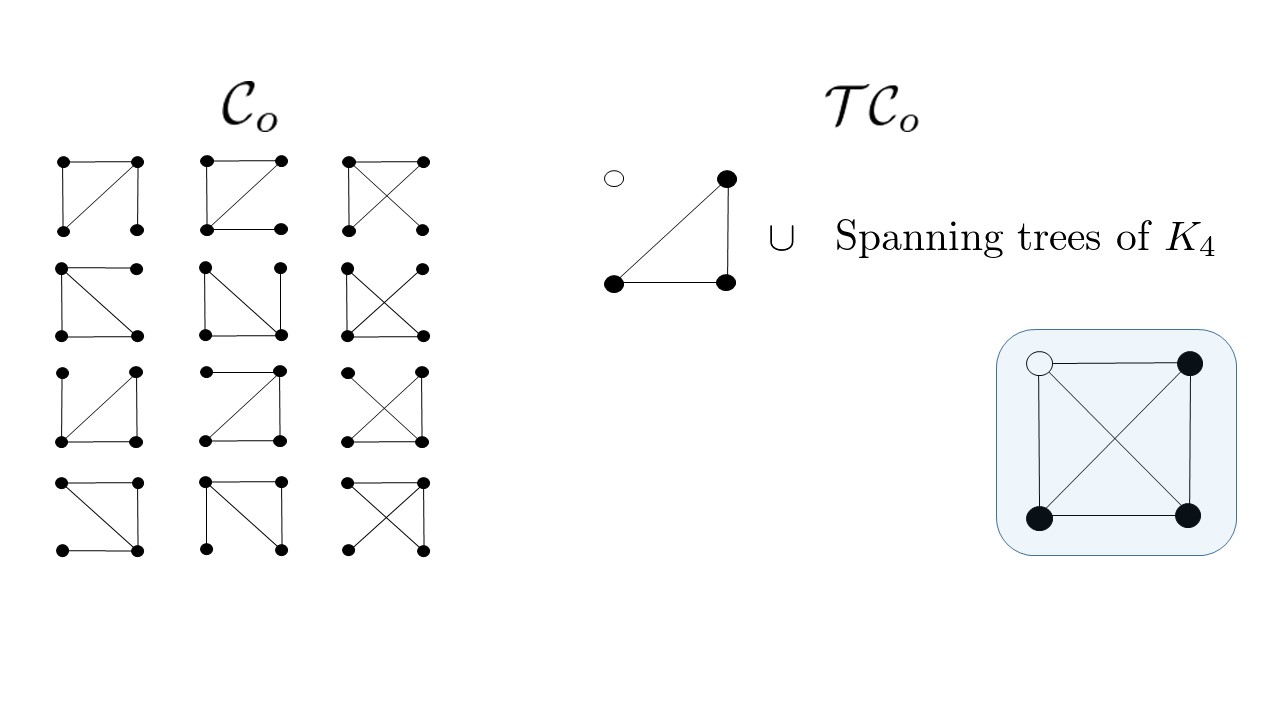}
    \caption{$\mathcal{C}_o$-factor and $\mathcal{T}\mathcal{C}_o$-factor of $K_{4}$: The white colored vertex in the complete graph $K_4$ corresponds to $u_1$. The left figure depicts the list of the family of odd unicyclic factor of $K_4$ and the right figure depicts the list of the $\mathcal{T}\mathcal{C}_o$-factor of $K_4$. Note that the isolated vertex is regards as a tree and the family of the  spanning tree is included in $\mathcal{T}\mathcal{C}_o$. }
    \label{fig:subgraphs}
\end{figure}

\section{Example}\label{Example}
As an example to our result in Theorem \ref{power}, we consider the connected  graphs with $4$ vertices labeled by  $\left\lbrace u_{1},u_{2},u_{3},u_{4}\right\rbrace$. The setting is the same as in Theorem \ref{power} and we choose the inflow $\bm{\alpha}=(1,0)$ at the vertices $u_{1}$ and $u_{4}$, respectively. We classify these graphs into $10$ classes based on the number of edges, bipartiteness and the configuration of $u_{1}$ and $u_{4}$ and the numbers of important factors state in Definition~\ref{def:subgraphs}. 
Note that the comfortability for the non-bipartite case depends {\it only} on $u_1$. 
We conclude that every graph with $4$ vertices belongs to exactly one of the following classes (see Fig \ref{fig:ordering}):
$$\mathcal{G}_{1}=\left\lbrace K_{4} \right\rbrace,$$
$$\mathcal{G}_{2}=\left\lbrace K_{4}- u_{1} u_{j}  : j=2,3,4 \right\rbrace,$$
$$\mathcal{G}_{3}=\left\lbrace K_{4} - u_{i} u_{j}  : i,j=2,3,4 \right\rbrace,$$
$$\mathcal{G}_{4}=\left\lbrace C_{4}:u_{1} \sim u_{4} \right\rbrace,$$
$$\mathcal{G}_{5}=\left\lbrace C_{4}:u_{1} \nsim u_{4} \right\rbrace,$$
$$\mathcal{G}_{6}=\left\lbrace G: G \text{ is  constructed by joining } u_{1} \text{ to exactly one vertex in the cycle } u_{2},u_{3},u_{4}\right\rbrace,$$
$$\mathcal{G}_{7}=\left\lbrace G:  G \text{ is constructed by joining } u_{i} (i=2,3,4) \text{ to exactly one vertex in the cycle } u_{1},u_{k},u_{l} (k,l \neq 1,i)\right\rbrace,$$
$$\mathcal{G}_{8}=\left\lbrace T:T \text{ is a tree with } \text{dist}(u_{1},u_{4})=1 \right\rbrace,$$
$$\mathcal{G}_{9}=\left\lbrace T:T \text{ is a tree with } \text{dist}(u_{1},u_{4})=2 \right\rbrace,$$
$$\mathcal{G}_{10}=\left\lbrace T:T \text{ is a tree with } \text{dist}(u_{1},u_{4})=3 \right\rbrace.$$
Here $K_{4}- u_{i} u_{j} $ is the graph obtained by removing the edge $ u_{i} u_{j} $ from $K_{4}$. Now for $G_{i} \in \mathcal{G}_{i}(i=1,...,10)$, we compute $\varEpsilon_{QW}$ as follows. When $G_{0}=K_{4}$ for example, since $K_{4}$ is non-bipartite, we have 
\[
\varEpsilon_{QW}(K_{4})=\dfrac{\iota_{2}(K_{4};u_{1})}{\iota_{1}(K_{4})}.
\]
To compute $\iota_{1}(K_{4})$, we need to find the number of odd unicyclic subgraphs which span $K_{4}$. The only such possible  subgraph is a $3$-cycle with an additional edge. It is clear that there are $4$ ways to choose a $3$-cycle and for a chosen $3$-cycle, there are $3$ ways to choose an edge which connects the remaining vertex to the cycle. Hence, altogether there are $12$ such subgraphs, which are shown as $\mathcal{C}_{o}$ in the figure \ref{fig:subgraphs}. Observe that each subgraph has only one component and hence we have
\[
\iota_{1}(K_{4})=48.
\]

Now to compute $\iota_{2}(K_{4};u_{1})$, we have to find the spanning subgraphs which contains a tree with $u_{1}$ and the remaining are odd cycles, which are possibly empty. There are two types of such subgraphs, as shown as $\mathcal{T}\mathcal{C}_{o}$ in the figure \ref{fig:subgraphs} one being the spanning trees and the other being the cycle $u_{2},u_{3},u_{4}$ along with the single vertex $u_{1}$. In the first case the number of spanning trees is the tree number, or the complexity, of $K_{4}$, given by $4^{4-2}=4^{2}$ (e.g.,~\cite{Biggs}) while in the second case there is only one such subgraph which has two components. So it follows that
\[
\iota_{2}(K_{4};u_{1})=20
\]
and hence 
\[
\varEpsilon_{QW}(K_{4})=\dfrac{20}{48}=\dfrac{5}{12}.
\]

Now consider $G_{4} \in \mathcal{G}_{4}$. This graph is a $4$-cycle with $u_{1} \sim u_{4}$. This graph is bipartite and hence \
\[
\varEpsilon_{QW}(G_{4})=\dfrac{1}{4}\left(
\dfrac{\chi_{2}(G_{4};u_{1},u_{4})}{\chi_{1}(G_{4})}+|E_{0}|\right).
\]
Since $\chi_{1}(G_{4})$ is the tree number of $G_{4}$ and by removing an edge from $G_{4}$ we can get a spanning tree, we have
\[
\chi_{1}(G_{4})=4.
\]
To compute $\chi_{2}(G_{4};u_{1},u_{4})$, we have to find the number of forests with exactly two components, one containing $u_{1}$ and the other containing $u_{4}$. To find such a forest, the edge $\left\lbrace u_{1},u_{4} \right\rbrace$ has to be removed, and other than that, one of the remaining edges has to be removed. So we have
\[
\chi_{2}(G_{4};u_{1},u_{4})=3
\]
and hence
\[
\varEpsilon_{QW}(G_{4})= \dfrac{1}{4}(\dfrac{3}{4}+4)=\dfrac{19}{16}.
\]
Similarly, we compute the comfortability on the remaining classes of graphs and we tabulate these values as shown in Table 1.
\begin{table}[htbp]
\begin{center}
\begin{tabular}{| c || c |c| c| c| c| c| c| c| c| c|}
\hline
 $\mathcal{G}_{i}$ & $\mathcal{G}_{1}$ & $\mathcal{G}_{2}$ & $\mathcal{G}_{3}$ & $\mathcal{G}_{4}$ & $\mathcal{G}_{5}$ & $\mathcal{G}_{6}$ & $\mathcal{G}_{7}$ & $\mathcal{G}_{8}$ & $\mathcal{G}_{9}$ & $\mathcal{G}_{10}$ \\ 
 \hline\hline
 $\varEpsilon_{QW}(\mathcal{G}_{i})$& $5/12$ & $3/4$ & $1/2$ &  $19/16$ & $5/4$ & $7/4$ & $3/4$ & $1$ & $5/4$ & $3/2$  
\\
\hline
Scattering & R & R & R & T & T & R & R & T & T & T \\ \hline
Bipartiteness & - & - & - & $\circ$ & $\circ$ & - & - & $\circ$ & $\circ$ & $\circ$ \\ \hline
$|E|$ & $6$ & $5$ & $5$ & $4$ & $4$ & $4$ & $4$ & $3$ & $3$ & $3$ \\ \hline
\end{tabular}
\end{center}
\caption{The comfortability of quantum walker to graphs with four vertices: The comfortability to each graph class is described by $\mathcal{E}_{QW}$. The symbols of ``R" and ``T" mean the perfectly reflection and transmitting, respectively. The best graph with four vertices of the comfortability is the $\mathcal{G}_6$ and the worst graph is the complete graph.}
\end{table}

Based on $\varEpsilon_{QW}$, we have the following ordering of graphs:
$$G_{6} \succ_{QW} G_{10}\succ_{QW} G_{5}, G_{9} \succ_{QW} G_{4} \succ_{QW} G_{8}\succ_{QW} G_{2}, G_{7}\succ_{QW} G_{3}\succ_{QW} G_{1}.$$
\begin{figure}[htbp]
    \centering
    \includegraphics[width=15cm]{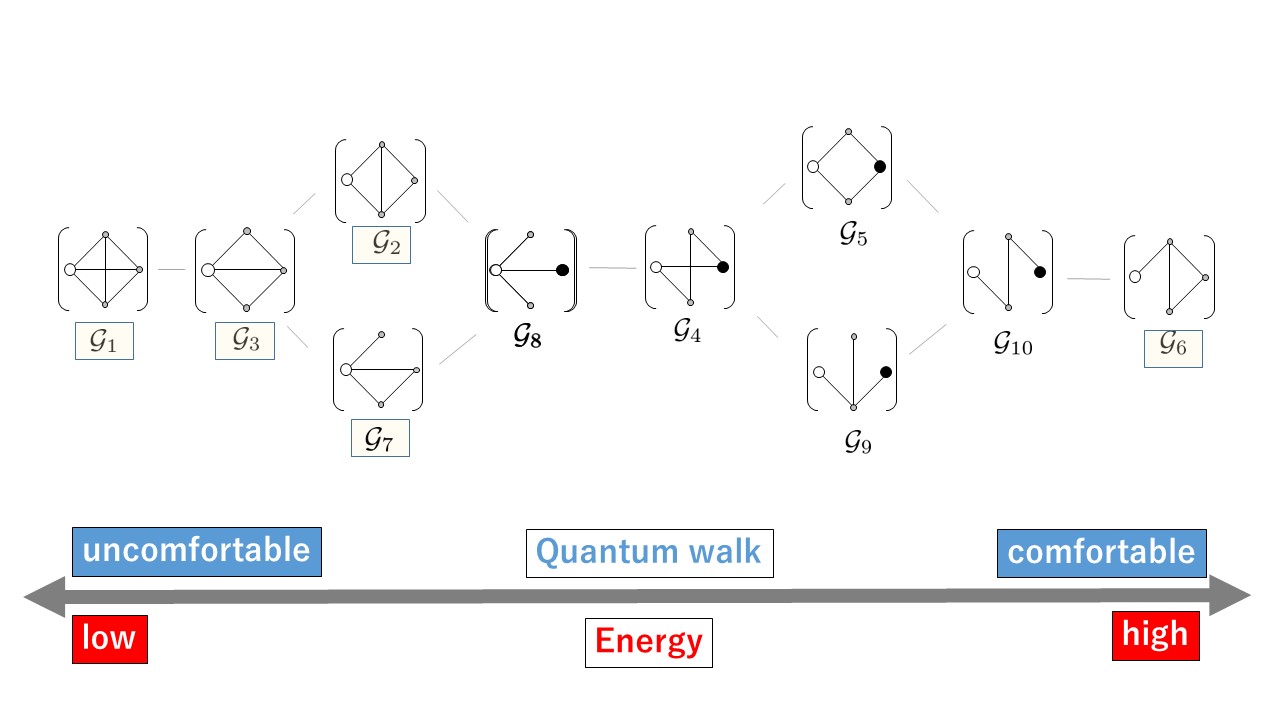}
    \caption{The Hasse diagram of comfortability of graphs:  The most comfortable graph for this quantum walker is $\mathcal{G}_{6}$ which is a non-bipartite graph, while the most uncomfortable graph is $\mathcal{G}_1$ (the complete graph). Here the entrance vertex $u_{1}$ is indicated by the white vertex, and the exit vertex $u_N$ is indicated by the black vertex for the bipartite case. Note that in the non-bipartite case, the comfortability for this quantum walker is independent of the position of the exit vertex (see Theorem~\ref{thm:comfortability}). }
    \label{fig:ordering}
\end{figure}
Furthermore, we remark that for a tree $T$ in general, with $n$ vertices, the comfortability of the quantum walker is given by 
$$\varEpsilon_{QW}(T)=\dfrac{1}{4}\left(\text{dist}(u_{1},u_{n})+  (n-1)\right).$$

Finally we also consider the comparison between the comfortability for $z=1$ and $z=-1$ cases.
Any graph of four vertices is isomorphic to one of the following graphs $\Gamma_1$,$\Gamma_2$,\dots, $\Gamma_6$ as in Figure \ref{Fig:graphs4}:

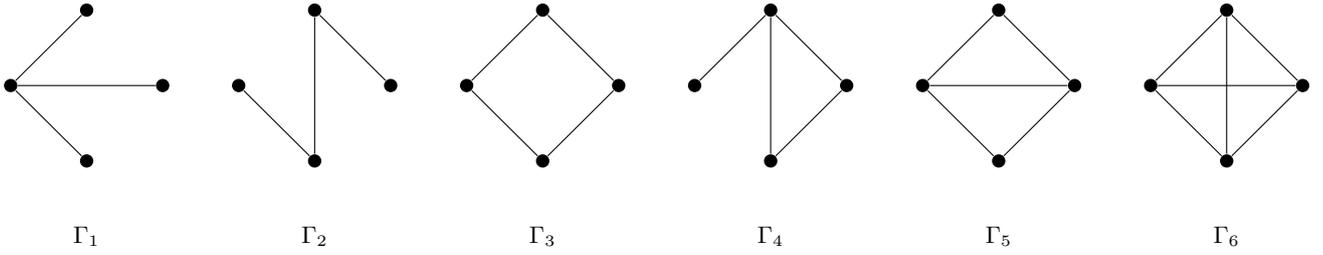
\begin{figure}[h]\label{Fig:graphs4}
\begin{center}
\begin{tikzpicture}[shorten >=1pt,->]
  \tikzstyle{vertex}=[circle,fill=black,minimum size=5pt,inner sep=1pt]
  \tikzstyle{ann} = [fill=white,font=\footnotesize,inner sep=1pt]

\node[vertex][] (G_1_1) at (-10,0) {};
  \node[vertex][] (G_1_2) at (-9,1) {};
  \node[vertex][] (G_1_3) at (-8,0) {};
  \node[vertex][] (G_1_4) at (-9,-1) {};
  \draw (G_1_1)--(G_1_2)--cycle;
  \draw (G_1_1)--(G_1_3)--cycle;
  \draw (G_1_1)--(G_1_4)--cycle;
  \node[ann] at (-9,-2){$\Gamma_1$};
  
  \node[vertex][] (G_2_1) at (-7,0) {};
  \node[vertex][] (G_2_2) at (-6,1) {};
  \node[vertex][] (G_2_3) at (-5,0) {};
  \node[vertex][] (G_2_4) at (-6,-1) {};
  \draw (G_2_1)--(G_2_4)--(G_2_2)--(G_2_3)--cycle;
  \node[ann] at (-6,-2){$\Gamma_2$};
  
  \node[vertex][] (G_3_1) at (-4,0) {};
  \node[vertex][] (G_3_2) at (-3,1) {};
  \node[vertex][] (G_3_3) at (-2,0) {};
  \node[vertex][] (G_3_4) at (-3,-1) {};
  \draw (G_3_1)--(G_3_2)--(G_3_3)--(G_3_4)--(G_3_1)--cycle;
  \node[ann] at (-3,-2){$\Gamma_3$};
  
   \node[vertex][] (G_4_1) at (-1,0) {};
  \node[vertex][] (G_4_2) at (0,1) {};
  \node[vertex][] (G_4_3) at (1,0) {};
  \node[vertex][] (G_4_4) at (0,-1) {};
  \draw (G_4_1)--(G_4_2)--(G_4_3)--(G_4_4)--(G_4_2)--cycle;
  \node[ann] at (0,-2){$\Gamma_4$};
  
   \node[vertex][] (G_5_1) at (2,0) {};
  \node[vertex][] (G_5_2) at (3,1) {};
  \node[vertex][] (G_5_3) at (4,0) {};
  \node[vertex][] (G_5_4) at (3,-1) {};
  \draw (G_5_1)--(G_5_2)--(G_5_3)--(G_5_4)--(G_5_1)--cycle;
  \draw (G_5_1)--(G_5_3)--cycle;
  \node[ann] at (3,-2){$\Gamma_5$};
  
  \node[vertex][] (G_6_1) at (5,0) {};
  \node[vertex][] (G_6_2) at (6,1) {};
  \node[vertex][] (G_6_3) at (7,0) {};
  \node[vertex][] (G_6_4) at (6,-1) {};
  \draw (G_6_1)--(G_6_2)--(G_6_3)--(G_6_4)--(G_6_1)--cycle;
  \draw (G_6_1)--(G_6_3)--cycle;
  \draw (G_6_2)--(G_6_4)--cycle;
  \node[ann] at (6,-2){$\Gamma_6$};
  
\end{tikzpicture}
\end{center}
\caption{Non-isomorphic graphs $\Gamma_j$'s for four vertices}
\end{figure}

We set 
\[ \mathrm{Comf}(G)=\max_{u_1,u_n\in V(G)} \mathcal{E}_{QW}(G;\bm{\alpha},u_1,u_n) \]
for $\bm{\alpha}=(1,0)$ at $u_1$ and $u_n$. 
Then we have the comfortabilities for each graph and $z=\pm 1$ as in Table~2. 
\begin{table}[htbp]\label{table:comf2}
\begin{center}
\begin{tabular}{| c || c |c| c| c| c| c|}
\hline
  & $\Gamma_{1}$ & $\Gamma_{2}$ & $\Gamma_{3}$ & $\Gamma_{4}$ & $\Gamma_{5}$ & $\Gamma_{6}$  \\ 
 \hline\hline
 $z=-1$ & $5/4$ & $3/2$ & $5/4$ &  $7/4$ & $3/4$ & $5/12$   
\\
\hline
 $z=1$ & $5/4$ & $3/2$ & $5/4$ &  $17/12$ & $3/2$ & $13/8$   
\\ \hline
\end{tabular}
\end{center}
\caption{$\mathrm{Comf}(\Gamma_j)$ for $z=\pm 1$}
\end{table}
Thus we have, for $z=-1$, 
    \[\mathrm{Comf}(\Gamma_6)<\mathrm{Comf}(\Gamma_5)<\mathrm{Comf}(\Gamma_1)=\mathrm{Comf}(\Gamma_3)<\mathrm{Comf}(\Gamma_2)<\mathrm{Comf}(\Gamma_4), \]
and for $z=1$,
\[\mathrm{Comf}(\Gamma_1)=\mathrm{Comf}(\Gamma_3)<\mathrm{Comf}(\Gamma_4)<\mathrm{Comf}(\Gamma_2)=\mathrm{Comf}(\Gamma_5)<\mathrm{Comf}(\Gamma_6). \]

\section{Proof of main theorems}
\label{Proof}
\subsection{Proof of Theorem \ref{main1}}

To prove this theorem we use the following lemma.

\begin{lemma}
\label{lemma}
For a given $u \in V_{0}$, $\Psi_{\infty}(a)-\Psi_{\infty}(\overline{a})$ is constant for all $a \in \tilde{A}$ with $o(a)=u$.
\end{lemma}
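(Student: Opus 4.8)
The plan is to use the eigenvalue equation $U\Psi_\infty = \Psi_\infty$ from the convergence theorem, restricted locally at the vertex $u$, and exploit the explicit form of the signed Grover coin in \eqref{eq:U}. Fix $u \in V_0$ and let $a_1,\dots,a_d$ be the arcs with $o(a_j)=u$ (so $d = \deg(u)$), with reversals $\overline{a_1},\dots,\overline{a_d}$ satisfying $t(\overline{a_j})=u$. The stationarity equation $\Psi_\infty = U\Psi_\infty$ read at the arc $\overline{a_i}$ (whose origin has terminus $u$) says, after pulling out the sign $(-1)^{\mathbbm 1_{V_0}(u)} = -1$ since $u\in V_0$,
\[
\Psi_\infty(\overline{a_i}) \;=\; -\Big( \frac{2}{d}\sum_{j=1}^{d}\Psi_\infty(a_j) - \Psi_\infty(a_i)\Big),
\]
because the arcs $b$ with $t(b)=u$ are exactly the $a_j$, and $\overline{a_i} = \overline{b}$ forces $b = a_i$. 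Hence $\Psi_\infty(\overline{a_i}) + \Psi_\infty(a_i) = \tfrac{2}{d}\sum_j \Psi_\infty(a_j) =: s$, and crucially the right-hand side $s$ does not depend on $i$. Therefore $\Psi_\infty(a_i) + \Psi_\infty(\overline{a_i})$ is the same constant $s$ for every arc $a_i$ issuing from $u$.

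Now I would convert this "sum is constant" statement into the "difference is constant" statement claimed in the lemma. Write $c_i := \Psi_\infty(a_i) - \Psi_\infty(\overline{a_i})$. From $\Psi_\infty(a_i) + \Psi_\infty(\overline{a_i}) = s$ we get $\Psi_\infty(a_i) = (s+c_i)/2$ and $\Psi_\infty(\overline{a_i}) = (s-c_i)/2$. To pin down $c_i$, I would apply the same local computation one more time — but this time reading the equation $U\Psi_\infty=\Psi_\infty$ at the arc $a_i$ itself, i.e. at the other endpoint $v_i := t(a_i)$. By the identical argument at $v_i$, the quantity $\Psi_\infty(a_i)+\Psi_\infty(\overline{a_i})$ equals $\tfrac{2}{\deg(v_i)}\sum_{o(b)=v_i}\Psi_\infty(b)$; combined with the relation at $u$ this gives a linear relation linking $s$ at $u$ to the corresponding sums at the neighbours. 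Feeding $\Psi_\infty(a_i) = (s+c_i)/2$ into the $2$-norm finiteness on the tails (or into these vertex relations) forces all the $c_i$ at $u$ to coincide: indeed the difference $\Psi_\infty(a)-\Psi_\infty(\overline a)$ is, by the above, $2\Psi_\infty(a) - s$, and one checks $2\Psi_\infty(a_i) - s$ is independent of $i$ precisely because $\sum_j (2\Psi_\infty(a_j) - s) = 2\cdot\tfrac{d}{2}s - ds = 0$ combined with the further constraint coming from applying the relation again.

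I expect the genuine content — and the main obstacle — to be the very last step: showing not merely that $\Psi_\infty(a) + \Psi_\infty(\overline a)$ is constant over arcs from $u$ (which is essentially immediate from the coin's structure as above) but that the \emph{difference} is constant. The cleanest route is probably to observe that the Grover coin $\mathrm{Gr}(d)$ acts as $2|\mathbf{1}\rangle\langle\mathbf{1}|/d - I$, so the signed coin at $u\in V_0$ is $I - 2|\mathbf{1}\rangle\langle\mathbf{1}|/d$, which is a reflection across the hyperplane orthogonal to $\mathbf 1$; the eigenvalue-$1$ condition then says that the inflow vector at $u$ minus its "swapped" image lies along $\mathbf 1$, i.e. the difference vector $(\Psi_\infty(a_i)-\Psi_\infty(\overline{a_i}))_i$ is a scalar multiple of $\mathbf 1$ — which is exactly the assertion. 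I would present the argument in that linear-algebraic form, carefully checking the sign and the indexing $a \leftrightarrow \overline a$, rather than in coordinates.
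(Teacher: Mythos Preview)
Your coordinate computation contains an indexing error that derails the argument. You set $o(a_j)=u$, so the arcs $b$ with $t(b)=u$ are the \emph{reversals} $\overline{a_j}$, not the $a_j$ themselves. Moreover, the stationarity equation should be read at an arc whose \emph{origin} is $u$ (so that the coin at $u$ is the one acting), i.e.\ at $a_i$, not at $\overline{a_i}$. With the correct bookkeeping one gets immediately
\[
\Psi_\infty(a_i)=-\frac{2}{d}\sum_{j=1}^d \Psi_\infty(\overline{a_j})+\Psi_\infty(\overline{a_i}),
\]
hence $\Psi_\infty(a_i)-\Psi_\infty(\overline{a_i})=-\tfrac{2}{d}\sum_j\Psi_\infty(\overline{a_j})$, which is independent of $i$. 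That is the whole proof, and it is exactly what the paper does in three lines.

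Because of the slip you instead obtained that the \emph{sum} $\Psi_\infty(a_i)+\Psi_\infty(\overline{a_i})$ is constant in $i$, which is not what the local equation at $u$ gives, and then spent the rest of the proposal trying to convert ``sum constant'' into ``difference constant''. That conversion cannot work from the information at $u$ alone (your $\sum_j c_j=0$ observation does not force the $c_j$ to be equal), and the appeal to neighbouring vertices and to $\ell^2$ finiteness on the tails is both unnecessary and unjustified here. Your final linear-algebraic remark, that the signed coin $I-\tfrac{2}{d}J$ sends the incoming vector to the outgoing vector and hence their difference is a multiple of $\mathbf 1$, is correct and is precisely the paper's argument phrased abstractly; had you written only that paragraph (with the indexing checked), the proof would be complete.
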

\begin{proof}
It follows from the dynamics of the walk that, in the stationary state, for $a \in \tilde{A}$ such that $o(a) \in V_{0}$,

$$\Psi_{\infty}(a)=-\sum \limits_{b:t(b)=o(a)} \dfrac{2}{\text{deg}(o(a))}\Psi_{\infty}(b)+\Psi_{\infty}(\overline{a}).$$
It follows that
$$\Psi_{\infty}(a)-\Psi_{\infty}(\overline{a})=-\dfrac{2}{\text{deg}(u)}\sum \limits_{b:t(b)=u} \Psi_{\infty}(b), \; o(a)=u.$$
Observe that for a given $u \in V_{0}$, the right hand side of the equation is a constant. 
\end{proof}

Now we give the proof of Theorem \ref{main1} as follows. By Lemma \ref{lemma}, the measure $\Psi_{\infty}(a)-\Psi_{\infty}(\overline{a})$ is a measure on $u \in V_{0}$ such that $o(a)=u$. We denote this measure by $\rho(u)$. That is, $\Psi_{\infty}(a)-\Psi_{\infty}(\overline{a})=\rho(u), o(a)=u$. It follows that if $u \sim v$ then $\rho(u)=-\rho(v)$.

Suppose $G_{0}$ is non-bipartite. Then there is an odd cycle $C=(u_{1},...,u_{2l-1}), u_{i} \in V_{0}$ in $G_{0}$. Then we have $\rho(u_{1})=-\rho(u_{2})=\rho(u_{3})=...=\rho(u_{2l-1})=-\rho(u_{1})$ which implies $\rho(u_{1})=0$. Since $G_{0}$ is connected, $\rho(u)=0$ for any $u \in V_{0}$. Thus $\Psi_{\infty}(a)-\Psi_{\infty}(\overline{a})=0$ for any $a \in A_{0}$ such that $o(a) \in V_{0}$. In particular $\beta_{i}=\alpha_{i}$ for any $i$. Hence $\bm{\beta}=\bm{\alpha}$.

Now suppose $G_{0}$ is bipartite with $V_{0}=X\sqcup Y$. Observe that
\begin{equation}
\label{X}
s := \rho_{V_{0}}(v)
\end{equation}
is constant for all $v \in X$. Then
\begin{equation}
\label{Y}
\rho_{V_{0}}(v)=-s
\end{equation}
for all $v \in Y$. Define
\begin{align*}
A^{X}_{in} &= \left\lbrace a \in \tilde{A} : o(a) \in \tilde{V}\setminus V_{0} , t(a) \in X \right\rbrace,\\
A^{X}_{out} &= \left\lbrace a \in \tilde{A} : \overline{a} \in A^{X}_{in} \right\rbrace,\\
A^{Y}_{in} &= \left\lbrace a \in \tilde{A} : o(a) \in \tilde{V}\setminus V_{0} , t(a) \in Y \right\rbrace,\\
A^{Y}_{out} &= \left\lbrace a \in \tilde{A} : \overline{a} \in A^{Y}_{in} \right\rbrace.
\end{align*}
Then by applying the time evolution operator once, we get
$$U \left( \Psi_{\infty}|_{A^{X}_{in}}+\Psi_{\infty}|_{A^{Y}_{in}}+\Psi_{\infty}|_{A_{0}}\right) = \Psi_{\infty}|_{A^{X}_{out}}+\Psi_{\infty}|_{A^{Y}_{out}}+\Psi_{\infty}|_{A_{0}}.$$
By taking the squared norm, we get
$$\left\lVert \Psi_{\infty}|_{A^{X}_{in}} \right\rVert^{2}+\left\lVert \Psi_{\infty}|_{A^{Y}_{in}} \right\rVert^{2}+\left\lVert \Psi_{\infty}|_{A_{0}} \right\rVert^{2}=\left\lVert \Psi_{\infty}|_{A^{X}_{out}} \right\rVert^{2}+\left\lVert \Psi_{\infty}|_{A^{Y}_{out}} \right\rVert^{2}+\left\lVert \Psi_{\infty}|_{A_{0}} \right\rVert^{2}.$$
It follows from (\ref{X}) and (\ref{Y}) that
\begin{align*}
\left\lVert \Psi_{\infty}|_{A^{X}_{in}} \right\rVert^{2}+\left\lVert \Psi_{\infty}|_{A^{Y}_{in}} \right\rVert^{2}&= \left\lVert s \mathbbm{1}_{A^{X}_{in}}+S \Psi_{\infty}|_{A^{X}_{in}} \right\rVert^{2}+\left\lVert -s \mathbbm{1}_{A^{Y}_{in}}+S \Psi_{\infty}|_{A^{Y}_{in}} \right\rVert^{2}\\
&=s^{2} |X \cap \delta V| + 2s \sum \limits_{a \in A^{X}_{in}} \Psi_{\infty}(a) +\left\lVert \Psi_{\infty}|_{A^{X}_{in}} \right\rVert^{2}\\
&\qquad\qquad +s^{2} |Y \cap \delta V| 
- 2s \sum \limits_{a \in A^{Y}_{in}} \Psi_{\infty}(a)+\left\lVert \Psi_{\infty}|_{A^{Y}_{in}} \right\rVert^{2}.
\end{align*}
Here $S$ is the shift operator: $S\delta_a=\delta_{\bar{a}}$ for any $a\in \tilde{A}$. 
Hence if $s\neq 0$,
\begin{equation}
\label{constant}
    s=-\dfrac{2}{|\delta V|}\left( \sum \limits_{a \in A^{X}_{in}} \Psi_{\infty}(a)- \sum \limits_{a \in A^{Y}_{in}} \Psi_{\infty}(a)\right).
\end{equation}
Then it follows that
$$\beta_{i} = \alpha_{i} -\dfrac{2}{|\delta V|}\left( \sum \limits_{a \in A^{X}_{in}} \Psi_{\infty}(a)- \sum \limits_{a \in A^{Y}_{in}} \Psi_{\infty}(a)\right),$$
where $\beta_{i}=\Psi_{\infty}(a)$ for some $a \in  A^{X}_{out}$ and
$$\beta_{i} = \alpha_{i} -\dfrac{2}{|\delta V|}\left( \sum \limits_{a \in A^{Y}_{in}} \Psi_{\infty}(a)- \sum \limits_{a \in A^{X}_{in}} \Psi_{\infty}(a)\right),$$
where $\beta_{i}=\Psi_{\infty}(a)$ for some $a \in  A^{Y}_{out}$. Hence $\bm{\beta}=\tau \bm{\alpha}$ where
$$
\tau = 
\left(\begin{array}{@{}c|c@{}}
  \begin{matrix}
  -\dfrac{2}{|\delta V|}+1 & -\dfrac{2}{|\delta V|}&... \\
  -\dfrac{2}{|\delta V|} & -\dfrac{2}{|\delta V|}+1 & ...\\
  . & & \\
  . & & \\
  . & & \\
  \end{matrix}
  & \bignumber \\
\hline
  \bignumber &
  \begin{matrix}
  -\dfrac{2}{|\delta V|}+1 & -\dfrac{2}{|\delta V|}&... \\
  -\dfrac{2}{|\delta V|} & -\dfrac{2}{|\delta V|}+1 & ...\\
  . & & \\
  . & & \\
  . & & \\
  \end{matrix}
\end{array}\right)
.$$

It is clear that, to satisfy the condition $s\neq 0$, we must have $ \sum_{a \in A^{Y}_{in}} \Psi_{\infty}(a)\neq \sum _{a \in A^{X}_{in}} \Psi_{\infty}(a)$. 
Now let us see that $s=0$ if and only if $\kappa_X=\kappa_Y$, where 
$\kappa_X=\sum_{a \in A^{X}_{in}} \Psi_{\infty}(a)$ and $\kappa_Y=\sum_{a \in A^{Y}_{in}} \Psi_{\infty}(a)$. 


By Lemma \ref{lemma} we have, $s=0$ if and only if
$$\Psi_{\infty}(a)=\Psi_{\infty}(\overline{a}),\;a \in A_0 \text{ with } o(a) \in V_{0}$$
and $$\sum \limits_{b:t(b)=u} \Psi_{\infty}(b) =0,\; u \in V_{0}. $$
Thus we have 
\begin{align*}
0 &=\sum \limits_{a:t(a) \in Y} \Psi_{\infty}(a) \\ 
&=\sum_{a:\; t(a)\in Y,\;o(a)\in X}\Psi_{\infty}(a)+\kappa_Y \\
&=\sum_{a:\in o(a)\in X,\;t(a)\in Y} \Psi_{\infty}(a)+\kappa_Y \\
&=\sum \limits_{a:o(a) \in X} \Psi_{\infty}(a)-\kappa_X+\kappa_Y \\
&= -\kappa_X+\kappa_Y,
\end{align*}
which implies $\kappa_X=\kappa_Y$ if $s=0$. 

If $\bm{\alpha}$ satisfies $\kappa_X=\kappa_Y$, then $s=0$ and the same argument as in the case where $G_{0}$ is non-bipartite is valid;  $\bm{\beta}=\bm{\alpha}$ holds. It is easy to check that $\bm{\beta}=\tau \bm{\alpha}$ holds for $\tau$ defined above.
$\hfill \square$

\begin{figure}
\begin{center}
\begin{tikzpicture}[shorten >=1pt,->]
  \tikzstyle{vertex}=[circle,fill=black,minimum size=5pt,inner sep=1pt]
  \tikzstyle{vertex}=[circle,fill=black,minimum size=5pt,inner sep=1pt]
  \tikzstyle{defective}=[circle,fill=blue,minimum size=5pt,inner sep=1pt]
  \tikzstyle{ann} = [fill=white,font=\footnotesize,inner sep=1pt]
  
\node[vertex][] (G_1) at (-2,2) {};
  \node[vertex][] (G_2) at (-2,1) {};
  \node[vertex][] (G_3) at (-2,0) {};
  \node[vertex][] (G_4) at (-2,-1) {};

  \node[vertex][] (G_6) at (2,2) {};
  \node[vertex][] (G_7) at (2,1) {};
  \node[vertex][] (G_8) at (2,0) {};
  \node[vertex][] (G_9) at (2,-1) {};
 
  \node[vertex][] (G_1_1) at (-3.5,2) {};
  \node[vertex][] (G_2_1) at (-3.5,1) {};
  \node[vertex][] (G_3_1) at (-3.5,0) {};
  \node[vertex][] (G_4_1) at (-3.5,-1) {};
 
  \node[vertex][] (G_6_1) at (3.5,2) {};
  \node[vertex][] (G_7_1) at (3.5,1) {};
  \node[vertex][] (G_8_1) at (3.5,0) {};
  \node[vertex][] (G_9_1) at (3.5,-1) {};
  
  \draw[draw=black] (-2.5,-1.5) rectangle ++(1,4);
  \draw[draw=black] (1.5,-1.5) rectangle ++(1,4);
  
  \draw (G_1_1)--(G_1)--cycle;
  \draw (G_2_1)--(G_2)--cycle;
  \draw (G_3_1)--(G_3)--cycle;
  \draw (G_4_1)--(G_4)--cycle;
  \draw (G_6_1)--(G_6)--cycle;
  \draw (G_7_1)--(G_7)--cycle;
  \draw (G_8_1)--(G_8)--cycle;
  \draw (G_9_1)--(G_9)--cycle;
  \draw (G_1)--(G_6)--cycle;
  \draw (G_2)--(G_6)--cycle;
  \draw (G_3)--(G_8)--cycle;
  \draw (G_4)--(G_7)--cycle;
  \draw (G_1)--(G_9)--cycle;
  \draw (G_2)--(G_8)--cycle;
  \draw (G_3)--(G_7)--cycle;
  \draw (G_4)--(G_6)--cycle;
  
  \draw[arrows=->,color=red,ultra thick](-3,0)--(-2,0);

  \node[ann] at (-2.75,0.25){$1$};

\end{tikzpicture}
\caption{Example of the inflow of $s\neq 0$. }
\end{center}
\end{figure}
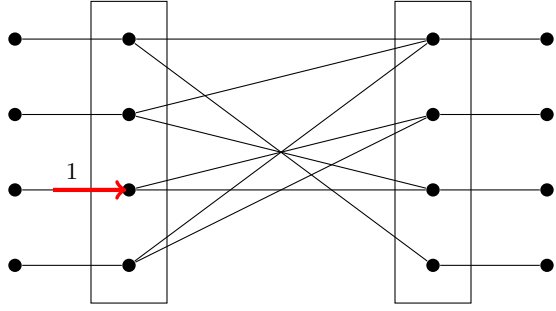

\subsection{Proof of Theorem \ref{thm:comfortability}}
The proof of Theorem~\ref{thm:comfortability} is divided into three parts with bipartite and non-bipartite cases as follows. 
First, we will show that the stationary state is expressed by a (pseudo-)current function. 
Secondly, we will determine the potential functions with respect to the (pseudo-)current function which satisfies the Poisson equation in terms of the (signless-)Laplacian. 
Finally, we will express the comfortability in terms of geometries of the graph using the above. 

\subsubsection{Kirchhoff and pseudo-Kirchhoff laws}
In this subsection we will see that the stationary state can represent a kind of current function \cite{Higuchi:Sabri:Segawa, Higuchi:Segawa} on the underlying bipartite and non-bipartite graphs. \\

\noindent{\bf Bipartite case} \\
First, we will introduce a current function induced by the stationary state $\Psi_\infty$ on a bipartite graph. Let $G_{0}$ be a bipartite graph with $V_{0}=X \sqcup Y$. Define the function $f(\cdot)$ such that
$$f(a)= 
\begin{cases}
0 & \text{if } \, t(a) \in X,\\\\
1 & \text{if } \, t(a) \in Y.
\end{cases}
$$
Then by Lemma \ref{lemma}, it follows that the measure $\dfrac{1}{2} \left( (-1)^{f(a)} \Psi_{\infty}(a) + (-1)^{f(\overline{a})} \Psi_{\infty}(\overline{a}) \right)$ is a constant for any $a \in A_{0}$. 
We denote this constant by $\rho$. Then the following theorem holds.
\begin{theorem}
\label{main2}
Let the setting be the same as the above and define $j(a)=(-1)^{f(a)} \Psi_{\infty}(a)-\rho$. Then $j(\cdot)$ satisfies the Kirchhoff's current and voltage laws: 
\begin{enumerate}
\item (Kirchhoff current law) 
For any $u \in \tilde{V}$ and $a \in \tilde{A}$, 
$$\sum \limits_{b \in \tilde{A}:t(b)=u} j(b)=0, \, j(a)+j(\overline{a})=0.$$
\item (Kirchhoff voltage law)
For any cycle $c=(a_{1},...,a_{s})$ with $t(a_{1})=o(a_{2}),...,t(a_{s-1})=o(a_{s}),t(a_{s})=o(a_{1})$ in $G_{0}$, it holds
$$\sum \limits_{k=1}^{s} j(a_{k})=0.$$
\end{enumerate}

\end{theorem}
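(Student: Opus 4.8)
The plan is to verify the two Kirchhoff laws directly from the definition $j(a)=(-1)^{f(a)}\Psi_\infty(a)-\rho$, exploiting the structure already established in Lemma~\ref{lemma}. First I would prove the antisymmetry $j(a)+j(\overline a)=0$: since $a,\overline a\in A_0$ lie on a single edge with one endpoint in $X$ and the other in $Y$, exactly one of $f(a),f(\overline a)$ equals $1$, so $(-1)^{f(a)}=-(-1)^{f(\overline a)}$; combining this with the fact that $\tfrac12\big((-1)^{f(a)}\Psi_\infty(a)+(-1)^{f(\overline a)}\Psi_\infty(\overline a)\big)=\rho$ (the statement proved just before the theorem) gives $(-1)^{f(a)}\Psi_\infty(a)+(-1)^{f(\overline a)}\Psi_\infty(\overline a)=2\rho$, hence $j(a)+j(\overline a)=\big((-1)^{f(a)}\Psi_\infty(a)-\rho\big)+\big((-1)^{f(\overline a)}\Psi_\infty(\overline a)-\rho\big)=0$. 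This also extends to arcs on the tails, where one must check the sign convention for $f$ on the semi-infinite paths is consistent with the bipartition; there $\Psi_\infty$ is constant along each tail (free dynamics) so the identity persists.

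Next, for the vertex law $\sum_{b:t(b)=u}j(b)=0$, I would split into the interior and tail cases. For $u\in V_0$, say $u\in X$ (the $u\in Y$ case is symmetric), all arcs $b$ with $t(b)=u$ satisfy $t(b)\in X$, so $f(b)=0$ and $j(b)=\Psi_\infty(b)-\rho$; summing over the $\deg(u)$ such arcs gives $\sum_{b:t(b)=u}j(b)=\sum_{b:t(b)=u}\Psi_\infty(b)-\deg(u)\rho$. The key input is the eigenvalue equation $U\Psi_\infty=\Psi_\infty$ rewritten as in Lemma~\ref{lemma}: $\Psi_\infty(a)-\Psi_\infty(\overline a)=-\tfrac{2}{\deg(u)}\sum_{b:t(b)=u}\Psi_\infty(b)$ for $o(a)=u$. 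I would use the already-established value of $\rho$ — from the defining relation $\tfrac12\big((-1)^{f(a)}\Psi_\infty(a)+(-1)^{f(\overline a)}\Psi_\infty(\overline a)\big)=\rho$ with $o(a)=u\in X$, so $t(a)\in Y$, $t(\overline a)=u\in X$, giving $\rho=\tfrac12\big(-\Psi_\infty(a)+\Psi_\infty(\overline a)\big)=-\tfrac12(\Psi_\infty(a)-\Psi_\infty(\overline a))=\tfrac{1}{\deg(u)}\sum_{b:t(b)=u}\Psi_\infty(b)$ — whence $\deg(u)\rho=\sum_{b:t(b)=u}\Psi_\infty(b)$ and the sum vanishes. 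For $u$ on a tail, $\deg(u)=2$ and the free dynamics forces the two incoming amplitudes to be negatives of each other after applying the sign $(-1)^{f}$; I would check this bookkeeping carefully since it is where a sign error would most likely hide.

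Finally, for the voltage law around a cycle $c=(a_1,\dots,a_s)$ in $G_0$: the point is that $j$ is locally exact. On each arc, $j(a)=(-1)^{f(a)}\Psi_\infty(a)-\rho$; I would define a potential $\phi(v)$ on vertices by choosing, for $a$ with $t(a)=v$, a value tied to $(-1)^{f(a)}\Psi_\infty(a)$, and show $j(a)=\phi(o(a))-\phi(t(a))$ — equivalently, that $(-1)^{f(a)}\Psi_\infty(a)$ depends on $a$ only through its endpoints. This follows from antisymmetry plus the vertex law: the quantity $(-1)^{f(a)}\Psi_\infty(a)$ for $a$ pointing \emph{into} $v$ is not a priori well-defined, so the honest route is to telescope: $\sum_{k=1}^s j(a_k)=\sum_{k=1}^s\big((-1)^{f(a_k)}\Psi_\infty(a_k)-\rho\big)$, and since consecutive arcs share a vertex whose color alternates along... actually the cleanest argument is that the cycle has even length in a bipartite graph, so $\sum_{k=1}^s(-\rho)=-s\rho$ with $s$ even, and pairing the arcs of $c$ with the reverse cycle $\overline c=(\overline a_s,\dots,\overline a_1)$ via antisymmetry shows $\sum j(a_k)=-\sum j(\overline a_k)=\sum j(a_k)$ only gives triviality, so instead I would invoke that the $j$-flow is a gradient because it is divergence-free and antisymmetric on the \emph{finite} graph $G_0$ (standard: such a flow with zero net current at every vertex, after removing the boundary source terms which here vanish in the $s\neq 0$ regime, is a cycle-space element orthogonal to... ) — the main obstacle is precisely making the "exactness" step rigorous, and I expect to handle it by constructing $\phi$ explicitly from $\Psi_\infty$ on $X$ and $Y$ and verifying $j=-B^*\phi$ directly, which reduces the voltage law to the identity $\sum_{k}(\phi(o(a_k))-\phi(t(a_k)))=0$ that holds trivially for any closed walk.
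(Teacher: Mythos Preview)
Your treatment of the current law (both antisymmetry $j(a)+j(\overline a)=0$ and the vertex sum $\sum_{t(b)=u}j(b)=0$) is correct and matches the paper's argument essentially line for line: you use Lemma~\ref{lemma} in the rewritten form $\tfrac{1}{\deg(u)}\sum_{b:t(b)=u}(-1)^{f(b)}\Psi_\infty(b)=\rho$, which is exactly what the paper does.

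The voltage law, however, has a real gap. You yourself flag it (``the main obstacle is precisely making the `exactness' step rigorous''), and the final fallback---``construct $\phi$ explicitly from $\Psi_\infty$ on $X$ and $Y$ and verify $j=-B^*\phi$ directly''---is circular: being able to write $j(a)=\phi(o(a))-\phi(t(a))$ for some $\phi$ is \emph{equivalent} to the voltage law, so you cannot define $\phi$ by path-integration of $j$ without first knowing the sums around cycles vanish. The more general claim that a divergence-free antisymmetric flow on a finite graph is automatically a gradient is false: the unit circulation around $C_4$ is such a flow and violates the voltage law. Something specific to $\Psi_\infty$ is needed.

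The paper supplies that missing ingredient: for each even cycle $c=(a_1,\dots,a_{2l})$ in $G_0$ it builds the function $\varphi$ supported on $c$ with alternating signs, observes that $\varphi$ is a centered eigenvector of $\chi U\chi^*$ with eigenvalue $-1$, and then invokes \cite[Lemmas~3.4 and 3.5]{Higuchi:Segawa} to conclude $\langle\psi_\infty,\varphi\rangle=0$. Expanding that inner product and using $j(a)+j(\overline a)=0$ yields $\sum_k j(a_k)=0$. This orthogonality of the stationary state to the ``birth part'' is the essential structural fact about the quantum walk that makes the voltage law go through; your proposal does not invoke it or any substitute for it.
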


\begin{proof}
We can rewrite the expression in Lemma \ref{lemma} as
$$\dfrac{1}{2} \left( (-1)^{f(a)} \Psi_{\infty}(a) + (-1)^{f(\overline{a})} \Psi_{\infty}(\overline{a}) \right)=\dfrac{1}{\text{deg}(u)}\sum \limits_{b:t(b)=u} (-1)^{f(b)} \Psi_{\infty}(b), \, o(a)=u,$$
in which the right hand side is a measure on the vertex $u$, denoted by $\rho(u)$. Then it follows that if $u \sim v$ then $\rho(u)=\rho(v)$. Since $G_{0}$ is connected, it follows that $\rho(u)$ is a constant for all $u \in V_{0}$, which is denoted by $\rho$. Summarizing the above, we have
\begin{equation}\label{eq:rho}
\dfrac{1}{\text{deg}(u)}\sum \limits_{b:t(b)=u} (-1)^{f(b)} \Psi_{\infty}(b)=\rho, u \in V_{0}.
\end{equation}
Then from (\ref{constant}) we can express
\[
\rho=\dfrac{1}{|\delta V|}\left( \sum \limits_{a \in A^{X}_{in}} \Psi_{\infty}(a)- \sum \limits_{a \in A^{Y}_{in}} \Psi_{\infty}(a)\right).
\]
Equation (\ref{eq:rho}) implies 
$$\sum \limits_{a:t(a)=u}j(a)=\sum \limits_{a:t(a)=u} (-1)^{f(a)} \Psi_{\infty}(a)-\rho \text{deg}(u)=0,\;u \in V_{0}.$$
Also we have
$$j(a)+j(\overline{a})=(-1)^{f(a)} \Psi_{\infty}(a) + (-1)^{f(\overline{a})} \Psi_{\infty}(\overline{a}) - 2\rho=0, \;a \in A_{0}.$$
Now suppose $C=(a_{1},...,a_{2l}),a_{i} \in A_{0}$ is an even cycle in $G_{0}$. Assume that $o(a_{1}) \in X$. Define $\varphi$ such that
$$\varphi(a)=
\begin{cases}
1 & \text{if } a = a_{2k+1} \text{ or } \overline{a}_{2k+1},\\
-1 & \text{if } a = a_{2k} \text{ or } \overline{a}_{2k},\\
0 & \text{otherwise.}
\end{cases}
$$
Let $\psi_\infty:=\chi^*\Psi_\infty$.
Since $\varphi$ is a centered eigenvector of $\chi U \chi^{*}$ whose eigenvalue is $-1$, by \cite[Lemmas 3.4 and 3.5]{Higuchi:Segawa} we have $\psi_{\infty} \perp \varphi$ and it follows that 
\begin{align*}
0&=\left\langle \psi_{\infty} | \varphi \right\rangle\\
&=\sum \limits_{a \in A_{0}} \psi_{\infty}(a)\varphi(a)\\
&=\sum \limits_{k=1}^{l} \left[\left[ \psi_{\infty}(a_{2k-1})+\psi_{\infty}(\overline{a}_{2k-1})\right]-\left[ \psi_{\infty}(a_{2k-1})+\psi_{\infty}(\overline{a}_{2k-1})\right]\right]\\
&=\sum \limits_{k=1}^{l} \left[(-1)^{f(a_{2k-1})} \left( j(a_{2k-1})-j(\overline{a}_{2k-1})\right)-(-1)^{f(a_{2k})} \left( j(a_{2k})-j(\overline{a}_{2k})\right) \right]\\
&=\sum \limits_{k=1}^{2l} \left[j(a_{k})-j(\overline{a}_{k}) \right].
\end{align*}
Since $j(a)+j(\overline{a})=0$, it follows that
$$\sum \limits_{k=1}^{s} j(a_{k})=0.$$
\end{proof}

\quad\\
\noindent{{\bf Non-bipartite case}}\\
Next let us investigate the property of $\psi_\infty$ in the case non-bipartite case. We can see that the stationary state itself has similar properties to the electrical current flow as follows. 
\begin{theorem}\label{thm:pKL}
Let $\Psi_\infty\in \tilde{A}$ be the stationary state on the non-bipartite graph. Then $\Psi_\infty$ satisfies the following properties: 
\begin{enumerate}
    \item (Pseudo-Kirchhoff current law) For each $u\in V_0$ and $a\in \tilde{A_0}$, 
    \[ \sum_{b:t(a)=u}\Psi_\infty(a)=0,\;\Psi_\infty(a)=\Psi_\infty(\bar{a}). \]
    \item (Pseudo-Kirchhoff voltage law) 
    For any even closed walk $(e_1,\dots,e_{2s})$ such that \\ $t(e_1)=o(e_2),\dots,t(e_{2s-1})=o(e_{2s})$ and $t(e_{2s})=o(e_1)$, 
    \[ \sum_{k=1}^{2s}(-1)^k\Psi_\infty(e_{k})=0. \]
\end{enumerate}
\end{theorem}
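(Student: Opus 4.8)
The plan is to handle the two parts separately, in close parallel with the bipartite statement Theorem~\ref{main2}: the pseudo-Kirchhoff current law is essentially already contained in the non-bipartite part of the proof of Theorem~\ref{main1}, while the pseudo-Kirchhoff voltage law will follow by pairing the stationary state with a centered eigenvector supported on the given even closed walk and invoking the orthogonality lemmas of \cite{Higuchi:Segawa}, exactly as in the proof of Theorem~\ref{main2}.

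For the current law I would simply reread Lemma~\ref{lemma}: it says that $\rho(u):=\Psi_\infty(a)-\Psi_\infty(\overline a)$ (for any $a$ with $o(a)=u$) is a well-defined function on $V_0$, with $\rho(u)=-\rho(v)$ whenever $u\sim v$ and $\rho(u)=-\tfrac{2}{\deg(u)}\sum_{b:t(b)=u}\Psi_\infty(b)$. Running around an odd cycle of $G_0$ forces $\rho\equiv 0$ there, and connectedness propagates this to all of $V_0$ --- precisely the computation already made in the non-bipartite case of the proof of Theorem~\ref{main1}. But $\rho\equiv 0$ is exactly the identity $\Psi_\infty(a)=\Psi_\infty(\overline a)$ for $a\in A_0$ (indeed for all $a\in\tilde A$, since $\sigma=I$ by Theorem~\ref{main1}), and combining $\rho(u)=0$ with the displayed formula gives $\sum_{b:t(b)=u}\Psi_\infty(b)=0$ for every $u\in V_0$. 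So the first part costs nothing beyond Theorem~\ref{main1}.

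For the voltage law, fix an even closed walk $(e_1,\dots,e_{2s})$ in $G_0$, so $o(e_{k+1})=t(e_k)$ with indices modulo $2s$. Imitating the vector $\varphi$ from the proof of Theorem~\ref{main2}, I would set
\[ \varphi=\sum_{k=1}^{2s}(-1)^k\bigl(\delta_{e_k}+\delta_{\overline{e_k}}\bigr)\in\mathbb{C}^{A_0}, \]
and then check that $\varphi$ is symmetric, $\varphi(a)=\varphi(\overline a)$, and centered: using $o(e_k)=t(e_{k-1})$, for every $v\in V_0$,
\[ \sum_{b:t(b)=v}\varphi(b)=\sum_{k=1}^{2s}(-1)^k\bigl(\mathbbm{1}[t(e_k)=v]+\mathbbm{1}[t(e_{k-1})=v]\bigr)=0, \]
the two sums cancelling after the bijective shift $k\mapsto k-1$ of $\mathbb{Z}/2s\mathbb{Z}$, which flips $(-1)^k$ since $2s$ is even. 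Thus $\varphi$ is a centered eigenvector of the Grover walk on the interior, of exactly the type used in the proof of Theorem~\ref{main2}, so \cite[Lemmas 3.4 and 3.5]{Higuchi:Segawa} give $\psi_\infty\perp\varphi$ with $\psi_\infty:=\chi^{*}\Psi_\infty$. Expanding this orthogonality and using $\Psi_\infty(e_k)=\Psi_\infty(\overline{e_k})$ from the first part,
\[ 0=\left\langle \psi_\infty | \varphi\right\rangle=\sum_{k=1}^{2s}(-1)^k\bigl(\Psi_\infty(e_k)+\Psi_\infty(\overline{e_k})\bigr)=2\sum_{k=1}^{2s}(-1)^k\Psi_\infty(e_k), \]
which is the claimed voltage law.

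The only genuinely non-elementary ingredient is the orthogonality of the stationary state to the centered (cyclic) eigenspace, imported from \cite{Higuchi:Segawa}; everything else is bookkeeping, so I expect that to be where the real content sits. The point demanding care is parity: the alternating weight $(-1)^k$ closes up consistently only for an even closed walk (for an odd one the two sums in the centeredness check add instead of cancel, so $\varphi$ need not be centered), which is why the statement is restricted to even closed walks; conversely a backtracking step $e_{k+1}=\overline{e_k}$ contributes $0$ to $\sum_k(-1)^k\Psi_\infty(e_k)$ by the symmetry of $\Psi_\infty$, so the identity depends only on the parity class of the walk.
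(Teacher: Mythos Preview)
Your proof is correct and follows the same approach as the paper: part~1 is read off from the non-bipartite case of Theorem~\ref{main1}, and part~2 is obtained by producing a symmetric centered vector supported on the even closed walk and invoking the orthogonality of $\psi_\infty$ to this ``birth'' eigenspace from \cite{Higuchi:Segawa}. The only difference is cosmetic: the paper defers to \cite{HKSS} for the description of that eigenspace, whereas you build $\varphi$ explicitly and verify centeredness by the index-shift computation.
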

\begin{proof}
The proof of the first part is obtained  from the proof of Theorem~\ref{main1}.
The second part shows $\langle  \psi_\infty,\phi \rangle=0$ for any $\phi$ which is an eigenfunction of the eigenvalue $-1$ in the birth part. For details, refer to \cite[Definition 4 and Theorem 1]{HKSS}. Moreover it has been shown by \cite{Higuchi:Segawa} that the stationary state is orthogonal to the birth part. Then we have the desired conclusion. 
\end{proof}

\subsubsection{Laplacian and signless Laplacian}
We have seen that the stationary state has the properties of current function or pseudo-current function in $\mathbb{C} ^ {\tilde{A}}$. 
Then it is natural to determine the potential function in $\mathbb{C}^{V_0}$ with respect to the current.  
In this subsection, we characterize the  
potential function using the Laplacian and the signless Laplacian for the cases of bipartite and non-bipartite graphs, respectively. \\

\noindent{\bf Bipartite case}\\
Let $M$ be the adjacency matrix and $D$ be the degree matrix of $G_0$. The Laplacian matrix  of $G_0$ is denoted by $L=M-D$. 
Using the Laplacian matrix with the Poisson equation, 
we can characterise the current function $j(\cdot)$ on $A_0$ in terms of a potential function on $V_0$ (\cite{Biggs} and \cite{Bollobas}) in which exists under the Kirchhoff current and voltage laws. 
\begin{theorem}[see e.g., \cite{Biggs,Bollobas}]
\label{laplacian}
Let the setting be the same as in Theorem \ref{main2}. Let $G_{0}$ be a bipartite graph and $L$ be the Laplacian matrix of $G_{0}$. Then there exists a potential function $\phi \in \mathbb{C}^{V_{0}}$ such that $j(a)=\phi(o(a))-\phi(t(a))$. Here $\phi$ satisfies the following equation.
$$
L \phi = -q,
$$
where $q(u)=\sum \limits_{a \in \tilde{A}\setminus A_{0}:t(a)=u} j(a),\;u \in V_{0}.$
\end{theorem}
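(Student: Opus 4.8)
The strategy is the classical one for electrical networks (cf.\ \cite{Biggs,Bollobas}): exhibit the current $j$ as a discrete gradient via the incidence operator, and then apply the incidence operator a second time to turn the Kirchhoff current law into a Poisson equation. Concretely, I would introduce the incidence operator $B:\mathbb{C}^{A_{0}}\to\mathbb{C}^{V_{0}}$, $(Bg)(v)=\sum_{a\in A_{0}:t(a)=v}g(a)-\sum_{a\in A_{0}:o(a)=v}g(a)$, whose adjoint is $(B^{*}h)(a)=h(t(a))-h(o(a))$, and record the standard identity $BB^{*}=2(D-M)$ (so $BB^{*}=-2L$ for the sign convention $L=M-D$ stated above). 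Note also that, since $A_{0}$ is symmetric and $j(a)+j(\overline a)=0$ by Theorem~\ref{main2}, $j$ descends to a well-defined function on unoriented edges, which is what makes the orientation bookkeeping below legitimate.

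The main step is the construction of the potential. I would fix a spanning tree $T$ of $G_{0}$ with a root $r$, set $\phi(r)=0$, and propagate $\phi$ along $T$ by $\phi(w):=\phi(u)-j(a)$ whenever $a$ is the tree arc from $u$ (the endpoint closer to $r$) to $w$; by construction $j(a)=\phi(o(a))-\phi(t(a))$ on all tree arcs, hence on their reversals as well by antisymmetry. For a non-tree arc $a$ of $G_{0}$, the Kirchhoff voltage law (Theorem~\ref{main2}, item~2) applied to the fundamental cycle of $a$, together with the telescoping identity $\sum_{b}j(b)=\phi(o(a))-\phi(t(a))$ over the tree path joining $o(a)$ to $t(a)$, again gives $j(a)=\phi(o(a))-\phi(t(a))$. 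Hence $j=-B^{*}\phi$ on $A_{0}$, and $\phi$ is determined up to an additive constant since $G_{0}$ is connected. (Equivalently: the voltage law says $j$ is orthogonal to the cycle space of $G_{0}$, so $j$ lies in the cut space $\operatorname{im}B^{*}$.) This is the heart of the proof — it is the only place the voltage law enters, and it is where one must keep careful track of arcs versus reversals in the symmetric set $A_{0}$.

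Finally I would apply $B$ to $j=-B^{*}\phi$, getting $Bj=-BB^{*}\phi=2L\phi$. On the other hand, for $u\in V_{0}$ I would split the current law $\sum_{b\in\tilde A:t(b)=u}j(b)=0$ into its $A_{0}$-part and its tail part, and use $j(a)+j(\overline a)=0$ on the tails as well, to obtain $\sum_{a\in A_{0}:t(a)=u}j(a)=-q(u)$ and $\sum_{a\in A_{0}:o(a)=u}j(a)=q(u)$, whence $(Bj)(u)=-2q(u)$. Comparing the two expressions for $Bj$ yields $2L\phi=-2q$, i.e.\ $L\phi=-q$, completing the proof. The only genuine obstacle is the potential construction in the second paragraph; everything else is bookkeeping once the sign conventions for $L$ and $B$ are pinned down.
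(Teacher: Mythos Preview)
Your argument is correct and is precisely the classical electrical-network proof: construct the potential along a spanning tree, extend to chords via the voltage law, and then read off the Poisson equation from the current law through the incidence operator. The paper does not actually supply a proof of this theorem; it records it as a standard fact with the attribution ``see e.g., \cite{Biggs,Bollobas}'' and only adds the observation that $q\in(\ker L)^{\perp}$ afterwards. So your proposal fills in exactly what the cited references contain, and there is nothing to compare against in the paper itself. (By contrast, the paper does spell out the analogous signless-Laplacian argument in Theorem~\ref{signlesslaplacian}, using the non-oriented incidence operator $\tilde C$ and the identification $\ker\tilde C=(\ker d\cap\mathcal H_{+})\oplus\mathcal H_{-}$ in place of your spanning-tree construction; if you wanted to mirror the paper's style you could phrase existence of $\phi$ as $j\in(\ker B)^{\perp}$ rather than building $\phi$ explicitly, but your direct construction is arguably cleaner.)
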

Here we should remark that $q \in (\text{ker}(L))^{\perp}$: actually we have 
\[
0=\sum_{a \in A: t(a) \in V_{0}} j(a) \equiv \sum_{a \in A \setminus A_{0}: t(a) \in V_{0}} j(a)=\sum_{v \in V_{0}} q(v) \bm{1}(v),
\]
where $\bm{1}(\cdot)$ is the constant function whose value is $1$ and $\text{ker}(L)=\left\lbrace c\bm{1} : c \in \mathbb{C} \right\rbrace$.

\quad \\
{\bf Non-bipartite case} \\
When the underlying graph is non-bipartite, 
the following two properties hold for the stationary sate $\Psi_{\infty}$ by Theorem~\ref{thm:pKL}: 
\begin{align}
\Psi_{\infty}(a) &=\Psi_{\infty}(\overline{a}),\;\text{for any}\;a \in \tilde{A}, o(a) \in V_{0}; \label{eq:p1}\\
\sum\limits_{a \in \tilde{A} :t(a)=u} \Psi_{\infty}(a) &= 0, \;\text{for any}\;u \in V_{0}.\label{eq:p2}
\end{align}

As an analogy to the current function in the bipartite case, we represent these properties in term of the non-oriented incidence matrix and the {\it signless} Laplacian matrix in the following theorem. Here the sighless Laplacian matrix of $G_0$ is denoted by $Q=M+D$. 
\begin{theorem}
\label{signlesslaplacian}
Let $\Psi_\infty$ be the stationary state of quantum walk such that $\Psi_\infty(a)=\lim_{n\to\infty}(U^n\Psi_0)(a)$ for any $a\in \tilde{A}$.   
Let $G_{0}$ be a non-bipartite graph and $Q$ be the signless Laplacian matrix of $G_{0}$. Then there uniquely exists $\phi \in \mathbb{C}^{V_{0}}$ such that $\Psi_{\infty}(a)=\phi(o(a))+\phi(t(a))$ for any $a\in A_0$. Here $\phi$ satisfies the following equation.
$$
Q \phi = -q
$$
where $q(u)=\sum \limits_{a \in \tilde{A}\setminus A_{0}:t(a)=u} \Psi_\infty(a),\;u \in V_{0}.$
\end{theorem}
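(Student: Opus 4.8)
The plan is to run the argument behind Theorem~\ref{laplacian} with the \emph{unsigned} incidence operator in place of the oriented one and the signless Laplacian $Q=D+M$ in place of the Laplacian. Define $N\colon\mathbb{C}^{A_0}\to\mathbb{C}^{V_0}$ by $(N\psi)(u)=\sum_{a:t(a)=u}\psi(a)+\sum_{a:o(a)=u}\psi(a)$; its adjoint is $(N^{*}\phi)(a)=\phi(o(a))+\phi(t(a))$ and a short computation gives $NN^{*}=2Q$. With this notation the assertion ``$\Psi_\infty(a)=\phi(o(a))+\phi(t(a))$ for all $a\in A_0$'' is exactly ``$\Psi_\infty|_{A_0}=N^{*}\phi$'', and from it the Poisson equation follows at once: applying $N$ and using the pseudo-Kirchhoff current law of Theorem~\ref{thm:pKL} --- i.e.\ $\Psi_\infty(a)=\Psi_\infty(\overline a)$ together with $\sum_{a\in\tilde A:t(a)=u}\Psi_\infty(a)=0$, which yields $\sum_{a\in A_0:t(a)=u}\Psi_\infty(a)=-q(u)$ and hence $N\Psi_\infty|_{A_0}=-2q$ --- one obtains $2Q\phi=NN^{*}\phi=-2q$, that is $Q\phi=-q$.

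Uniqueness is the easy half. If $N^{*}\phi=N^{*}\phi'$, then $(\phi-\phi')(u)+(\phi-\phi')(v)=0$ along every edge $uv$; since $G_0$ is connected and contains an odd cycle, running around that cycle forces $\phi-\phi'$ to vanish at one of its vertices, and connectedness spreads this to all of $V_0$. Equivalently, $Q$ has trivial kernel on a connected non-bipartite graph, so it is invertible and $\phi=-Q^{-1}q$ is the only candidate.

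The real content is existence, i.e.\ that $\Psi_\infty|_{A_0}$ lies in $\mathrm{range}(N^{*})=(\ker N)^{\perp}$. Using $\Psi_\infty(a)=\Psi_\infty(\overline a)$ we identify $\Psi_\infty|_{A_0}$ with a function $\widehat\Psi$ on $E_0$ and must show $\widehat\Psi\perp\ker\bar N$ for the unsigned vertex--edge incidence matrix $\bar N$. I would first verify that for every even closed walk $W=(e_1,\dots,e_{2s})$ the vector $g_W(e):=\sum_{k:\,e_k=e}(-1)^{k}$ lies in $\ker\bar N$ --- at each vertex the incident contributions cancel in consecutive pairs, including at the vertex where $W$ closes up --- and then that such $g_W$ span $\ker\bar N$: fix a spanning tree $T$; since the parity of cycle length is a linear functional on the binary cycle space and $G_0$ is non-bipartite, not every fundamental cycle is even, so for each non-tree edge one can exhibit an even closed walk (its fundamental cycle, or that cycle concatenated through $T$ with one fixed odd closed walk) and the resulting vectors are independent and number $|E_0|-|V_0|=\dim\ker\bar N$. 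Since $\langle\widehat\Psi,g_W\rangle=\sum_{k=1}^{2s}(-1)^{k}\Psi_\infty(e_k)=0$ by the pseudo-Kirchhoff voltage law of Theorem~\ref{thm:pKL}, we conclude $\widehat\Psi\in\mathrm{range}(N^{*})$, so $\phi$ exists.

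The main obstacle is precisely this spanning statement --- collecting enough even closed walks to exhaust $\ker\bar N$ --- and it is exactly here that non-bipartiteness is indispensable, supplying the odd closed walk that repairs the parity of odd fundamental cycles. An equivalent, more hands-on route sidesteps it: choose a base vertex $x_0$ and a fixed odd closed walk at $x_0$ of length $2m+1$, put $\phi(x_0)=\tfrac12\sum_{k=1}^{2m+1}(-1)^{k-1}\Psi_\infty(e_k)$, and propagate along walks $x_0,x_1,\dots,x_\ell=u$ by the telescoping identity $\sum_{k=1}^{\ell}(-1)^{k-1}\bigl(\phi(x_{k-1})+\phi(x_k)\bigr)=\phi(x_0)+(-1)^{\ell-1}\phi(x_\ell)$; well-definedness of $\phi(u)$ across both parities again reduces to the pseudo-Kirchhoff voltage law, after which $\Psi_\infty(a)=\phi(o(a))+\phi(t(a))$ holds on a spanning tree by construction and on all of $A_0$ by the same law. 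In either approach, once existence is in hand $Q\phi=-q$ and uniqueness are immediate.
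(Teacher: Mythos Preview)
Your argument is correct and its skeleton---unsigned incidence operator $N$ with $NN^{*}=2Q$, existence via $\psi_\infty\in\mathrm{Range}(N^{*})$, then $Q\phi=-q$ from the pseudo-Kirchhoff current law, and uniqueness from invertibility of $Q$ on a connected non-bipartite graph---matches the paper's. The divergence is in how existence is secured. The paper works on the arc space and decomposes $\ker\tilde C=(\ker d\cap\mathcal H_+)\oplus\mathcal H_-$, then disposes of the two summands separately: $\psi_\infty\in\mathcal H_+=\mathcal H_-^\perp$ by the symmetry $\psi_\infty(a)=\psi_\infty(\bar a)$, and $\psi_\infty\perp(\ker d\cap\mathcal H_+)$ by invoking \cite{Higuchi:Segawa} directly. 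The paper in fact remarks that existence already follows from the pseudo-Kirchhoff voltage law (Theorem~\ref{thm:pKL}), and that is precisely the route you flesh out---passing to the edge space, identifying $\ker\bar N$ with the span of even-closed-walk vectors $g_W$, and pairing against $\widehat\Psi$. So your existence argument is a concrete realization of the alternative the paper only names; it has the virtue of being self-contained modulo Theorem~\ref{thm:pKL}, whereas the paper's version leans on the structural description of the birth eigenspace from \cite{HKSS,Higuchi:Segawa}. One cosmetic point: you write ``for each non-tree edge'' but then count $|E_0|-|V_0|$ vectors; there are $|E_0|-|V_0|+1$ non-tree edges, and the one whose fundamental cycle \emph{is} the fixed odd cycle produces $g_W=0$ under your recipe (traversing an odd cycle twice cancels), so the honest count is indeed $|E_0|-|V_0|$ nonzero vectors, independent because their restrictions to the remaining non-tree edges form a triangular system. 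Tightening that sentence would remove the apparent off-by-one.
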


\begin{proof}
Denote the non-oriented incidence matrix on the set of arcs  by $\tilde{C}:\mathbb{C}^{A_{0}} \rightarrow \mathbb{C}^{V_{0}}$ which satisfies

$$(\tilde{C} \psi)(v_{i})= \sum\limits_{j=1}^{m} c_{ij}\psi(a_{j}),\;1 \leq i \leq n,
$$
where $m=|A_{0}|$ and
$$
c_{ij}= 
\begin{cases}
1 & \text{if } \, t(a_{j}) = v_{i} \text{ or } o(a_{j}) = v_{i},\\
0 & \text{otherwise. }
\end{cases}
$$
Then we have
$$(\tilde{C} \psi)(v)= \sum\limits_{a \in A_{0}:t(a)=v} \psi(a)+\sum\limits_{a \in A_{0}:o(a)=v} \psi(a),\;v \in V_{0}.
$$
The adjoint operator $\tilde{C}^{*}:\mathbb{C}^{V_{0}} \rightarrow \mathbb{C}^{A_{0}}$ is given by
$$
(\tilde{C}^{*}f)(a)=f(t(a))+f(o(a)).
$$
Then we have $\tilde{C}\tilde{C}^{*}$ in terms of the signless Laplacian matrix $Q$
$$
\tilde{C}\tilde{C}^{*}=2(D + M)=2Q.
$$
Let $\psi_\infty=\chi^*\Psi_\infty$. 
Now let us see that if there exists a potential function $\phi \in \mathbb{C}^{V_{0}}$ such that $\psi_{\infty}(a)=(\tilde{C}^{*} \phi)(a)$ for any $a\in A_0$, then $\phi$ must satisfy \[ Q\phi=-q. \] 
Remark that  for any $a\in A_0$
$$
\psi_{\infty}(a)=(\tilde{C}^{*} \phi) (a)= \phi (t(a)) + \phi (o(a))=\phi (t(\overline{a})) + \phi (o(\overline{a}))=(\tilde{C}^{*} \phi) (\overline{a})=\psi_{\infty}(\overline{a})
$$
and hence the first property of (\ref{eq:p1}) can be easily confirmed. 
We have, by the property (\ref{eq:p1}),   
\begin{align*}
    \sum \limits_{a \in \tilde{A}:o(a)=u} \Psi_{\infty}(a)&=\sum \limits_{a \in \tilde{A}:o(a)=u} \Psi_{\infty}(\overline{a})\\
    &=\sum \limits_{a \in \tilde{A}:t(a)=u} \Psi_{\infty}(a), 
\end{align*}
and by the property (\ref{eq:p2}),  
$$
\sum \limits_{a \in \tilde{A}:t(a)=u} \Psi_{\infty}(a) + \sum \limits_{a \in \tilde{A}:o(a)=u} \Psi_{\infty}(a)=0, \;u \in V_{0}.
$$
Let us divide the summation in the above equation by
$$
\sum \limits_{a \in A_{0}:t(a)=u} \Psi_{\infty}(a)+\sum \limits_{a \in \tilde{A}\setminus A_{0}:t(a)=u} \Psi_{\infty}(a)+\sum \limits_{a \in A_{0}:o(a)=u} \Psi_{\infty}(a)+\sum \limits_{a \in \tilde{A}\setminus A_{0}:o(a)=u} \Psi_{\infty}(a)=0.
$$
Then it follows that
$$
(\tilde{C}\psi_{\infty})(u)=-2\sum \limits_{a \in \tilde{A}\setminus A_{0}:t(a)=u} \Psi_{\infty}(a).
$$
which implies
$$
Q\phi =\dfrac{1}{2} \tilde{C}\tilde{C}^{*} \phi =-q,
$$
where $q(u)=\sum_{a \in \tilde{A}\setminus A_{0}:t(a)=u} \Psi_{\infty}(a),u \in V_{0}.$ 
Although the existence of the potential function $\phi$ is ensured by the pseudo-Kirchhoff voltage law in Theorem~\ref{thm:pKL}, we prove here directly as follows. 
To show the existence of $\phi$, it is enough to show that
$$
\psi_{\infty} \in \text{Range}(\tilde{C}^{*})=\text{ker}(\tilde{C})^{\perp}.
$$

Let us see that 
\begin{equation}\label{eq:kerC}
\text{ker}(\tilde{C})=(\text{ker}(d) \cap \mathcal{H}_{+}) \oplus 
\mathcal{H}_{-},
\end{equation}
where 
$\mathcal{H}_{+}=\left\lbrace \psi \in \mathbb{C}^{A_{0}} : \psi(a) = \psi(\overline{a}) \text{ for any } a \in A_{0} \right\rbrace$,  $\mathcal{H}_{-}=\left\lbrace \psi \in \mathbb{C}^{A_{0}} : \psi(a) = -\psi(\overline{a}) \text{ for any } a \in A_{0} \right\rbrace$
and 
$$
(d \varphi)(u)=\dfrac{1}{\sqrt{\deg_{G_0}(u)}}\sum \limits_{a \in A_{0}: t(a)=u} \varphi(a).
$$
Note that $\mathcal{H}_{+} \oplus \mathcal{H}_{-}= \mathbb{C}^{A_{0}}$ and $\mathcal{H}_+=\ker(1-S_0)$, $\mathcal{H}_-=\ker(1+S_0)$, where 
$S_0$ is the shift operator on $\mathbb{C}^{A_0}$ such that $S_0\delta_a=\delta_{\bar{a}}$ for any $a\in A_0$. 
The operator $\tilde{C}$ can be rewritten by  
    \[ (\tilde{C}\psi)(u)=\sqrt{\deg_{G_0}(u)} \left( (d\psi)(u)+(dS_0\psi)(u) \right). \]
Then we have
\[ \ker \tilde{C}=\ker(d(1+S_0)), \]
which implies (\ref{eq:kerC}).
By \cite{Higuchi:Segawa} it follows that
$$
\psi_{\infty} \in (\text{ker}(d) \cap \mathcal{H}_{+})^{\perp}. 
$$
Since $\psi_\infty(\bar{a})=\psi_\infty(a)$, then $\psi_\infty\in \mathcal{H}_+=\mathcal{H}_-^\perp$. 
Therefore $\psi_\infty\in \ker(\tilde{C})^\perp=\text{Range}(\tilde{C}^{*})$. 

Furthermore, since $G_{0}$ is non-bipartite, the least eigenvalue of $Q$ is positive (cf. \cite{CRS}) and hence $Q$ is invertible. 
This completes the proof of the existence and the uniqueness of $\phi$.
\end{proof}
$$
$$

\subsubsection{Comfortability }
We have seen that relations between  stationary state and the (signless-)Laplacian, which contains information of the geometry of the graph. 
Let us express the comfortability in terms of some graph geometrical properties. 
This leads the completion of the proof of Theorem~\ref{thm:comfortability}. 
\\

\noindent{\bf Bipartite  case}\\
Now we introduce the energy of the electric circuit $\varEpsilon_{EC}(G_{0})$ which is given by,
$$
\varEpsilon_{EC}(G_{0})=\dfrac{1}{2} \left\lVert j \right\rVert^{2}=\frac{1}{2}\sum_{a\in A_0}|j(a)|^2.
$$

To give the next proposition, we prepare the following notion. Note that the Laplacian matrix $L$ is singular and hence $\phi$ is not determined uniquely. We impose the ground condition $\phi(n)=0$ which reduces the equation in Theorem \ref{laplacian} to $L^{(n)}\phi^{(n)}=-q^{(n)}$, where $L^{(n)}$ is the matrix obtained by removing the $n$-th row and the $n$-th column of the Laplacian matrix, $\phi^{(n)}$ and $q^{(n)}$ are the vectors obtained by removing the $n$-th element from $\phi$ and $q$ respectively. Here, $\text{det}(L^{(n)})$ is the number of spanning trees of $G_{0}$ (see \cite{Biggs}), and hence non-zero. So $\phi^{(n)}$ is determined uniquely by $\phi^{(n)}= -(L^{(n)})^{-1}q^{(n)}$. More over, we denote by $B$ and $\tilde{B}$, the usual incidence matrix and the non-oriented incidence matrix on the set of edges respectively. More precisely, we fix an orientation of each $e\in E_0$ and denote it by $\vec{E}_0$: 
\[A_0=\vec{E}_0\cup \overline{(\vec{E}_0)},\]
where $\overline{(\vec{E}_0)}=\{a\in A_0 \;|\; \bar{a}\in \vec{E}_0\}$. 
Then $B:\mathbb{C}^{\vec{E}_0}\to \mathbb{C}^{V_0}$ is denoted by 
\begin{align*}
    (B\psi)(u) &= \sum_{t(a)=u}\psi(a)-\sum_{o(a)=u}\psi(a);
\end{align*}
then its adjoint is expressed by 
\[  (B^*f)(a) = f(t(a))-f(o(a)).  \]
The non-oriented incidence matrix  $\tilde{B}:\mathbb{C}^{\vec{E}_0}\to \mathbb{C}^{V_0}$ is denoted by 
\begin{align*}
    (\tilde{B}\psi)(u) &= \sum_{t(a)=u}\psi(a)+\sum_{o(a)=u}\psi(a);
\end{align*}
then its adjoint is expressed by 
\[  ({\tilde{B}}^*f)(a) = f(t(a))+f(o(a)).  \]
Now we give the following proposition.

\begin{proposition}
\label{prop}
The electrical energy of the circuit is given by
$$
\varEpsilon_{EC}(G_{0})=\dfrac{1}{\mathrm{det}(L^{(n)})} \sum\limits_{i,j=1}^{n-1} (-1)^{i+j} q^{(n)}(i)q^{(n)}(j) \sum\limits _{\substack{H \subset G_{0}\\|E(H)|=n-2}} \mathrm{det}(B_{H}^{(n,j)})\mathrm{det}((B_{H}^{(n,i)})^{*}),
$$
where $B_{H}^{(n,j)}$ is the matrix obtained by choosing the columns corresponding to the edges in $H$ and removing the $j$-th and $n$-th rows in the oriented incidence matrix of $G_{0}$.
\end{proposition}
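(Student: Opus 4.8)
The plan is to turn the definition $\varEpsilon_{EC}(G_{0})=\frac{1}{2}\sum_{a\in A_0}|j(a)|^{2}$ into a quadratic form in the boundary data $q$, and then to expand the resulting inverse matrix by a Cauchy--Binet argument. First I would reduce the energy: by Theorem~\ref{main2} one has $j(a)=-j(\overline a)$, so $\sum_{a\in A_0}|j(a)|^{2}=2\sum_{a\in\vec{E}_0}|j(a)|^{2}$, and by Theorem~\ref{laplacian}, $j(a)=\phi(o(a))-\phi(t(a))=-(B^{*}\phi)(a)$ on $\vec{E}_0$. Hence $\varEpsilon_{EC}(G_{0})=\lVert B^{*}\phi\rVert^{2}=\langle\phi,BB^{*}\phi\rangle=\langle\phi,L\phi\rangle=-\langle\phi,q\rangle$, where $BB^{*}=L$ is the Laplacian and $L\phi=-q$. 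Imposing the ground condition $\phi(n)=0$ turns the Poisson equation into $L^{(n)}\phi^{(n)}=-q^{(n)}$ with $L^{(n)}$ invertible ($\det L^{(n)}$ equals the number of spanning trees by the matrix--tree theorem), so $\phi^{(n)}=-(L^{(n)})^{-1}q^{(n)}$ and
\[ \varEpsilon_{EC}(G_{0})=-\langle\phi^{(n)},q^{(n)}\rangle=(q^{(n)})^{*}\,(L^{(n)})^{-1}\,q^{(n)}, \]
the conjugation being harmless since $q$ is real in the setting of Theorem~\ref{thm:comfortability} (equivalently, one reads the right-hand side as the associated bilinear form).

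Next I would expand $(L^{(n)})^{-1}=\det(L^{(n)})^{-1}\,\mathrm{adj}(L^{(n)})$, recalling that $\mathrm{adj}(L^{(n)})_{ij}=(-1)^{i+j}$ times the minor of $L^{(n)}$ obtained by deleting the $j$-th row and the $i$-th column; this already produces the prefactor $1/\det(L^{(n)})$ and the double sum $\sum_{i,j=1}^{n-1}(-1)^{i+j}q^{(n)}(i)q^{(n)}(j)\cdot(\text{minor})$. To evaluate the minor I would use the factorization $L^{(n)}=B^{(n)}(B^{(n)})^{*}$, where $B^{(n)}$ is the oriented incidence matrix of $G_0$ with its $n$-th row removed; deleting the $j$-th row and the $i$-th column of this product leaves $B^{(n,j)}(B^{(n,i)})^{*}$, a product of an $(n-2)\times|E_0|$ matrix with an $|E_0|\times(n-2)$ matrix, where $B^{(n,j)}$ is the oriented incidence matrix with its $n$-th and $j$-th rows removed. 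Cauchy--Binet then writes the determinant of this product as a sum over $(n-2)$-element subsets of $E_0$, i.e.\ over spanning subgraphs $H\subset G_0$ with $|E(H)|=n-2$, namely $\sum_{H}\det(B_H^{(n,j)})\det((B_H^{(n,i)})^{*})$, where $B_H^{(n,j)}$ keeps only the columns indexed by $E(H)$. Substituting this identity for the minor gives precisely the asserted formula.

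The rest is routine linear algebra, so the step that needs care is the sign and index bookkeeping: matching the $(-1)^{i+j}$ coming from the adjugate with the row/column deletion pattern encoded in $B_H^{(n,j)}$ and $(B_H^{(n,i)})^{*}$, and checking that the minor has size $n-2$ so that only subgraphs with exactly $n-2$ edges contribute. It is worth recording, though not needed for the proposition itself, that by total unimodularity $\det(B_H^{(n,j)})\in\{0,\pm1\}$, vanishing unless $H$ is a spanning forest with the appropriate two components; this, combined with the special boundary data $\bm{\alpha}=(1,0)$, is what will later turn the present formula into the spanning-forest count appearing in Theorem~\ref{thm:comfortability}.
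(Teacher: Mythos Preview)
Your proposal is correct and follows essentially the same route as the paper: reduce $\varEpsilon_{EC}$ to the quadratic form $\langle (L^{(n)})^{-1}q^{(n)},q^{(n)}\rangle$ via the Poisson equation and the ground condition, expand the inverse through cofactors, factor the resulting minors as $B^{(n,j)}(B^{(n,i)})^{*}$, and apply Cauchy--Binet. The paper phrases the cofactor step as Cramer's rule followed by a column expansion rather than invoking the adjugate directly, but this is the same computation; your version is arguably cleaner in that it makes the origin of the sign $(-1)^{i+j}$ explicit, and your remark on the reality of $q$ addresses a point the paper leaves implicit.
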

\begin{proof}
By definition, we can write the electrical energy in terms of the Laplacian matrix as follows.

\begin{align*}
\varEpsilon_{EC}(G_{0})&=\dfrac{1}{2} \left\lVert j \right\rVert^{2}
=\left\langle B^{*}\phi , B^{*}\phi \right\rangle
=\left\langle \phi , L\phi \right\rangle
=-\left\langle \phi , q \right\rangle
=-\left\langle \phi^{(n)} , q^{(n)} \right\rangle
=\left\langle (L^{(n)})^{-1}q^{(n)} , q^{(n)} \right\rangle\\
&=\sum\limits_{i=1}^{n-1} q^{(n)}(i)((L^{(n)})^{-1}q^{(n)})(i).
\end{align*}
By Cramer's rule, we get,
$$
\varEpsilon_{EC}(G_{0})=\dfrac{1}{\text{det}(L^{(n)})}\sum\limits_{i=1}^{n-1} q^{(n)}(i)\text{det}(L^{(n)}_{i}),
$$
where $\text{det}(L^{(n)}_{i})$ is the matrix obtained by replacing the $i$-th column of $L^{(n)}$ by $q^{(n)}$. Now by expanding along the $i$-th column, we get
$$
\varEpsilon_{EC}(G_{0})=\dfrac{1}{\text{det}(L^{(n)})}\sum\limits_{i,j=1}^{n-1} q^{(n)}(i)q^{(n)}(j)\text{det}(L^{(n)}_{(j,i)}),
$$
where $L^{(n)}_{(j,i)}$ is the matrix obtained by removing the $j$-th row and the $i$-th column from $
L^{(n)}$. Note that $L^{(n)}_{(j,i)}=B^{(j,n)}(B^{(i,n)})^{*}$ and hence
$$
\varEpsilon_{EC}(G_{0})=\dfrac{1}{\text{det}(L^{(n)})}\sum\limits_{i,j=1}^{n-1} q^{(n)}(i)q^{(n)}(j)\text{det}(B^{(j,n)}(B^{(i,n)})^{*}),
$$
where $B^{(j,n)}$ is the matrix obtained by removing the $j$-th and the $n$-th rows from $B$. Now by Binet-Cauchy theorem (see \cite{CRS}), it follows that
$$
\varEpsilon_{EC}(G_{0})=\dfrac{1}{\text{det}(L^{(n)})} \sum\limits_{i,j=1}^{n-1} (-1)^{i+j} q^{(n)}(i)q^{(n)}(j) \sum\limits _{\substack{H \subset G_{0}\\|E(H)|=n-2}} \text{det}(B_{H}^{(n,j)})\text{det}((B_{H}^{(n,i)})^{*}),
$$
where $B_{H}^{(n,j)}$ is the matrix obtained by choosing the columns corresponding to the edges in $H$ from $B^{(j,n)}$.
\end{proof}

Now we give the proof of Theorem \ref{thm:comfortability} by applying  Proposition~\ref{prop}. 
Observing that $\rho=1/2$, $q^{(n)}(1)=1-\rho=1/2$ and $q^{(n)}(i)=0$ $(i=2,\dots,n-1)$ in our setting of Theorem~\ref{thm:comfortability} and 
applying Proposition \ref{prop} to our setting, we get
$$
\varEpsilon_{EC}(G_{0})=\frac{1}{4}\dfrac{1}{\text{det}(L^{(n)})}\sum\limits _{\substack{H \subset G_{0}\\|E(H)|=n-2}} \left(\text{det}(B_{H}^{(n,1)})\right)^{2}.
$$
Focusing on the linearly dependence on the column vectors of the incidence matrix of the spanning subgraph $H\subset G_0$, we obtain the following:
\begin{enumerate}
\item If $H$ contains a cycle, then $ \text{det}(B_{H}^{(n,1)})=0$;
\item If $H$ contains a connected component including both $u_{1}$ and $u_{n}$, then $ \text{det}(B_{H}^{(n,1)})=0$.
\end{enumerate}
Hence, it implies that if $\text{det}(B_{H}^{(n,1)})\neq0$. Then $H$ is a spanning forest which contains exactly two components, one containing $u_{1}$ and the other one containing $u_{n}$. On the other hand, if $H$ is a spanning forest which contains exactly two components, one containing $u_{1}$ and the other one containing $u_{n}$, $B_{H}^{(n,1)}$ is of the form
$$
B_{H}^{(n,1)}=
\left(\begin{array}{@{}c|c@{}}
B_{T_{1}} & 0\\
\hline
0 & B_{T_{2}}
\end{array}\right)
$$
for the trees $T_{1}$ and $T_{2}$ in the forest which are the spanning trees of the two components of the forest. Now by \cite{Biggs}, $\text{det}(B_{T_{1}})=\text{det}(B_{T_{2}})=\pm 1$ and hence $\left(\text{det}(B_{H}^{(n,1)})\right)^2=1$. Thus, $\left(\text{det}(B_{H}^{(n,1)})\right)^2=1$ if and only if $H$ is a spanning forest which contains exactly two components, one containing $u_{1}$ and the other one containing $u_{n}$. Hence it follows that
$$
4 \varEpsilon_{EC}(G_{0})=\dfrac{\chi_{2}(G_{0};u_{1},u_{n})}{\chi_{1}(G_{0})},
$$
where $\chi_{1}(G_{0})$ is the tree number of $G_{0}$ and $\chi_{2}(G_{0};u_{1},u_{n})$ is the number of spanning forests of $G_{0}$ with exactly two components, one containing $u_{1}$ and the other containing $u_{n}$.

Let $G_{0}$ be a bipartite graph. Then
\begin{align}
\varEpsilon_{QW}&=\dfrac{1}{2}\sum \limits _{a \in A_{0}} | \Psi_{\infty}(a) |^{2}
=\dfrac{1}{2}\sum \limits _{a \in A_{0}} | j(a)+ \rho |^{2}
=\dfrac{1}{2} \sum \limits _{a \in A_{0}}  j(a)^{2}+\rho \sum \limits _{a \in A_{0}} j(a)+\rho^{2}|E_{0}| \notag\\
&=\varEpsilon_{EC}(G_{0})+\rho^{2}|E_{0}|.\label{eq:z1OK}
\end{align}
Note that in our setting with only two tails, we have $\rho^{2}=\dfrac{1}{4}$, which leads to the formula in the theorem.\\

\noindent{\bf Non-bipartite case}\\

Now let $G_{0}$ be non-bipartite. Then we have $\psi_{\infty}(e)=\psi_{\infty}(\bar{e})=(\tilde{B}^{*} \phi)(e)$ for any $e\in \vec{E}_0$ and $Q\phi=-q$. Now since $G_{0}$ is non-bipartite, it follows that $Q$ is invertible (for example, see \cite[Theorem 7.8.1]{CRS}) and hence $\phi = -Q^{-1}q$. By a similar argument, it follows that
\begin{align*}
\varEpsilon_{QW}(G_{0})&=\dfrac{1}{2}\sum \limits _{a \in A_{0}} | \Psi_{\infty}(a) |^{2}
=\dfrac{1}{2} \left\langle \psi_{\infty} , \psi_{\infty} \right\rangle
= \left\langle \tilde{B}^{*}\phi , \tilde{B}^{*}\phi \right\rangle
=\left\langle \phi , Q\phi \right\rangle
=\left\langle Q^{-1}q , q \right\rangle.
\end{align*}
By using the Cramer's rule and the Binet-Cauchy theorem, we can derive a similar expression for the non-bipartite graphs as follows:
\begin{proposition}\label{prop:2}
Let $G_0$ be a non-bipartite graph with arbitrary boundary . Then we have 
$$
\varEpsilon_{QW}(G_{0})=\dfrac{1}{\text{det}(Q)} \sum\limits_{i,j=1}^{n} (-1)^{i+j} q(i)q(j) \sum\limits _{\substack{H \subset G_{0}\\|E(H)|=n-1}} \text{det}(\tilde{B}_{H}^{(i)})\text{det}(\tilde{B}_{H}^{(j)}),
$$
where $\tilde{B}_{H}^{(j)}$ is the matrix obtained by choosing the columns corresponding to the edges in $H$ and removing the $j$-th row from $\tilde{B}$. 
\end{proposition}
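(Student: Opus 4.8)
The plan is to follow the proof of Proposition~\ref{prop} almost verbatim, the main simplification being that for a connected non-bipartite graph the signless Laplacian $Q$ is already invertible, so no ground condition on the potential is needed. Just before the statement we have recorded that $\varEpsilon_{QW}(G_0)=\langle Q^{-1}q,q\rangle$, with $\psi_\infty(e)=(\tilde{B}^*\phi)(e)$ and $\phi=-Q^{-1}q$. The first ingredient is the edge-level identity $\tilde{B}\tilde{B}^*=D+M=Q$; note that, in contrast with the arc-level identity $\tilde{C}\tilde{C}^*=2Q$, there is no factor $2$ here because each $e\in\vec{E}_0$ contributes a single column to $\tilde{B}$ rather than the arc pair $a,\bar a$.

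Next I would expand $\varEpsilon_{QW}(G_0)=\sum_{i=1}^n q(i)\,(Q^{-1}q)(i)$ and apply Cramer's rule: $(Q^{-1}q)(i)=\det(Q_i)/\det(Q)$, where $Q_i$ is $Q$ with its $i$-th column replaced by $q$. Expanding $\det(Q_i)$ along the $i$-th column gives $\det(Q_i)=\sum_{j=1}^n(-1)^{i+j}q(j)\det(Q_{(j,i)})$, where $Q_{(j,i)}$ denotes $Q$ with its $j$-th row and $i$-th column removed. Using $Q=\tilde{B}\tilde{B}^*$ one identifies $Q_{(j,i)}=\tilde{B}^{(j)}(\tilde{B}^{(i)})^*$, where $\tilde{B}^{(j)}$ is $\tilde{B}$ with row $j$ deleted and $(\tilde{B}^{(i)})^*$ is the conjugate transpose of $\tilde{B}$ with row $i$ deleted. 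Applying the Binet--Cauchy theorem to this product of an $(n-1)\times|E_0|$ matrix with an $|E_0|\times(n-1)$ matrix, and using that a determinant is unchanged under transposition, yields
\[
\det(Q_{(j,i)})=\sum_{\substack{H\subset G_0\\ |E(H)|=n-1}}\det(\tilde{B}_H^{(j)})\det(\tilde{B}_H^{(i)}).
\]
Substituting this back and relabelling the summation indices produces exactly the asserted formula.

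The only points requiring care are the bookkeeping of rows versus columns when passing from $Q$ to $\tilde{B}$ (removing the $i$-th column of $Q$ corresponds to removing the $i$-th row of $\tilde{B}^*$, hence the $i$-th row of $\tilde{B}$ after transposing) and the invertibility of $Q$. The latter is not really an obstacle: it is precisely the statement that the least eigenvalue of the signless Laplacian of a connected non-bipartite graph is positive, already invoked via \cite{CRS} in the proof of Theorem~\ref{signlesslaplacian}. Everything else is the line-by-line analogue of Proposition~\ref{prop}, so I do not expect a genuinely new difficulty here.
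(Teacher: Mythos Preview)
Your proposal is correct and follows exactly the approach the paper intends: the paper itself does not spell out a proof of Proposition~\ref{prop:2} beyond the sentence ``By using the Cramer's rule and the Binet--Cauchy theorem, we can derive a similar expression for the non-bipartite graphs,'' together with the chain $\varEpsilon_{QW}(G_0)=\langle\tilde{B}^*\phi,\tilde{B}^*\phi\rangle=\langle\phi,Q\phi\rangle=\langle Q^{-1}q,q\rangle$ recorded just before the statement. Your write-up is precisely the line-by-line analogue of the proof of Proposition~\ref{prop}, with the correct observation that $\tilde{B}\tilde{B}^*=Q$ (no factor~$2$) and that invertibility of $Q$ removes the need for a ground condition.
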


Observing that $q(j)=\delta_1(j)$ in our setting and applying Proposition~\ref{prop:2}, then we have 
$$
\varEpsilon_{QW}(G_{0})=\dfrac{1}{\text{det}(Q)}\sum\limits _{\substack{H \subset G_{0}\\|E(H)|=n-1}} \left(\text{det}(\tilde{B}_{H}^{(1)})\right)^{2}
.$$
Focusing on the linearly dependence on the column vectors of the incidence matrix of the spanning subgraph $H\subset G\setminus\{{u_1}\}$, we obtain the following:
\begin{enumerate}
\item If $H$ has an even cycle, then $\text{det}(\tilde{B}_{H}^{(1)})=0$;
\item If $H$ has a connected component having at least two odd cycles, then $\text{det}(\tilde{B}_{H}^{(1)})=0$;
\item If $H$ has a connected component having an odd cycle and $u_{1}$, then $\text{det}(\tilde{B}_{H}^{(1)})=0$.
\end{enumerate}
Hence it implies that if $\text{det}(\tilde{B}_{H}^{(1)})\neq 0$ then $H=T \cup C_{1} \cup ... \cup C_{k}$ where $u_{1} \in T$ where $T$ is a tree and $C_{1} \cup ... \cup C_{k}$ are odd unicycles. Now if $H=T \cup C_{1} \cup ... \cup C_{k}$ where $u_{1} \in T$ where $T$ is a tree and $C_{1} \cup ... \cup C_{k}$ are odd unicycles, then $\tilde{B}_{H}^{(1)}$ is of the form
$$
\tilde{B}_{H}^{(1)}=
\left(\begin{array}{@{}c|c|c|c@{}}
\tilde{B}_{T} & 0 & ... & 0\\
\hline
0 & \tilde{B}_{C_{1}} & ... & 0\\
\hline
. & & &\\
. & & &\\
. & & &\\
\hline
0 & ... & 0 & \tilde{B}_{C_{k}}
\end{array}\right).
$$
Note that $\text{det}(\tilde{B}_{T})=\pm 1$ (by \cite{Biggs}) and by expanding the determinant of the non-oriented incidence matrix of a cycle, we can show that $\text{det}(\tilde{B}_{C_{i}})=\pm 2$ (see Appendix for the details), which implies that
$$
\left(\text{det}(\tilde{B}_{H}^{(1)})\right)^2=4^{\omega(H)-1},
$$
where $\omega(H)$ is the number of components in $H$. Furthermore, let
$$
\mathcal{C}_{o}=\{\{C_{1},...,C_{r}\} | C_{i}\text{'s are odd unicyclic graphs}\}
$$
and
$$
\mathcal{T}\mathcal{C}_o=\{\{T,C_{1},...,C_{k}\}| u_{1}\in T \text{ where } T \text{ is a tree and }C_{i}\text{'s are odd unicyclic graphs}\}.
$$
Then by \cite{CRS2007},
$$
\text{det}(Q)=\sum \limits _{H \in C_{0}} 4^{\omega(H)}
$$
and it follows that
$$
\varEpsilon_{QW}=\dfrac{\sum \limits _{H \in \mathcal{T}\mathcal{C}_o} 4^{\omega(H)-1}}{\sum \limits _{H \in C_{0}} 4^{\omega(H)}}
$$
which completes the proof.
$\hfill \square$

\appendix
\section*{Appendix}

Let $C_{i}$ be an odd unicyclic graph and let $\tilde{B}_{C_{i}}$ be the non-oriented incident matrix of $C_{i}$. Let us prove that $\text{det}(\tilde{B}_{C_{i}})=\pm 2$. Since an odd unicyclic graph is obtained by connecting trees to an odd cycle (say $C'$), If we expand $\text{det}(\tilde{B}_{C_{i}})$ along the row corresponding to a leaf of a tree connected to $C'$, it reduces to a similar determinant. By expanding recursively along the rows corresponding to the leaves of the graph, we obtain
$$\text{det}(\tilde{B}_{C_{i}})=\pm \text{det}(B_{C'}).$$

Now it is enough to show that $\text{det}(B_{C'})=2$. Observe that we have
$$
\text{det}(B_{C'})=\text{det}
\left(\begin{array}{@{}c c c c c@{}}
1 & 0 & ... & 0 & 1\\
1 & 1 & 0 & ... & 0\\
. & . & & &\\
. & & . & &\\
. & & & . &\\
0 & ... & 0 & 1 & 1
\end{array}\right).
$$
Expanding the determinant in the right hand side along the first row, since there are odd number of rows and columns, we get
$$
\text{det}(B_{C'})=\text{det}
\left(\begin{array}{@{}c c c c c@{}}
1 & 0 & ... & 0 & 0\\
1 & 1 & 0 & ... & 0\\
. & . & & &\\
. & & . & &\\
. & & & . &\\
0 & ... & 0 & 1 & 1
\end{array}\right) +
\text{det}
\left(\begin{array}{@{}c c c c c@{}}
1 & 1 & ... & 0 & 0\\
0 & 1 & 1 & ... & 0\\
. & . & & &\\
. & & . & &\\
. & & & . &\\
0 & ... & 0 & 0 & 1
\end{array}\right).
$$
Now to compute the two determinants in the right hand side, we expand the first determinant recursively along the rows and the second along the columns and we obtain that these two determinants are equal to $1$. Hence we conclude that 
$$\text{det}(\tilde{B}_{C_{i}})=\pm 2.$$

\vskip\baselineskip
\noindent{\bf Acknowledgments}: Yu.H. acknowledges financial supports from the Grant-in-Aid of
Scientific Research (C) Japan Society for the Promotion of Science (Grant No.~18K03401). 
M.S. acknowledges financial supports from JST SPRING (Grant No.~JPMJSP2114).
E.S. acknowledges financial supports from the Grant-in-Aid of
Scientific Research (C) Japan Society for the Promotion of Science (Grant No.~19K03616) and Research Origin for Dressed Photon. We would like to thank prof. Hajime Tanaka for the invaluable discussions and the comments which were very useful to carry out this study.


\begin{thebibliography}{99}
\bibitem{Ambainis} Ambainis, A. (2003) Quantum walks and their algorithmic applications. \textit{Int. J. Quantum Inf., 1 (04)}, 507-518. 

\bibitem{Biggs} Biggs, N. (1993) Algebraic Graph Theory (2nd ed.).  Cambridge University Press.


\bibitem{Boe} Boettcher,S., Varghese, C. and Novotny,M.A., 
Quantum transport through hierarchical structures, 
Phys. Rev. E {\bf 83} 041106 (2011). 


\bibitem{Bollobas} Bollobás, B. (2013) Modern Graph Theory. Springer.

\bibitem{CRS2007} Cvetkovic, D.M., Rowlinson, P. and Simic, S. (2007) Signless Laplacians of finite graphs. \textit{Linear Algebra Appl.
, 423(1), }155-171.

\bibitem{CRS} Cvetkovic, D.M., Rowlinson, P. and Simic, S. (2010) An introduction to the theory of graph spectra. Cambridge: Cambridge University Press.

\bibitem{DomanyEtAl}
Domany, E., Alexander, S., Bensimon, D. and Kadanaff, L.P., 
Solution to the Schr\"{o}dinger equation on some fractal lattices, 
Phys. Rev. B {\bf 28} (1998), 3100--3123. 

\bibitem{DS}
Doyle, P. G., and Snell, J. L. (1984). Random walks and electric networks (Vol. 22). American Mathematical Socety. 

\bibitem{FG} 
Farhi,E. and Gutmann, S.
Quantum computation and decision trees, 
Phys. Rev. A {\bf 58} (1998), 915--928.

\bibitem{Feldman:Hillery:2005} Feldman, E. and Hillery, M. (2005) Quantum walks on graphs and quantum scattering theory. 
\textit{Contemp. Math. 381}, 71.

\bibitem{Feldman:Hillery:2007} Feldman, E. and Hillery, M. (2007) Modifying quantum walks: a scattering theory approach. \textit{J. Phys. A: Math. Theor.} \textit{40}, 11343.

\bibitem{GodsilZhang}
Chris Godsil, Hanmeng Zhan,
Discrete-Time Quantum Walks and Graph Structures,
Journal of Combinatorial Theory, Series A 167 (2019) pp.181-212.


\bibitem{Grover} Grover, L.K. (1997) Quantum mechanics helps in searching for a needle in a haystack. \textit{Phys. Rev. Lett., 79(2)}, 325.

\bibitem{HKSS} Higuchi, Yu., Konno, N., Sato, I. and Segawa, E. (2014) Spectral and asymptotic properties of Grover walks on crystal lattices. \textit{J. Funct. Anal.
 267.,} 4197–4235.

\bibitem{Higuchi:Sabri:Segawa} Higuchi, Yu., Sabri, M. and Segawa, E. (2020) Electric Circuit Induced by Quantum Walk. \textit{J. Stat. Phys., 181, }603-617.

\bibitem{HS2}
Higuchi, Yu. and Segawa, E. (2018) Quantum walks induced by Dirichlet random walks on infinite trees,  
\textit{J. Phys. A: Math. Theor., 51,}  075303. 

\bibitem{Higuchi:Segawa} Higuchi, Yu. and Segawa, E. (2019) A dynamical system induced by quantum walk. \textit{J. Phys. A: Math. Theor., 52, }395202.

\bibitem{Konno}
Konno, N. (2008)
Quantum Walks, 
In: Quantum Potential Theory pp.309-452, Springer. 

\bibitem{KonnoTakei}
Konno, N. and  Takei, M. (2015)
The non-uniform stationary measure for discrete-time quantum walks in one dimension. Quantum Inf. Comput. 15(11,12) pp.1060-1075. 

\bibitem{Manouchehri:Wang} Manouchehri, K. and Wang, J. (2014) Physical Implementation of Quantum Walks. Springer.

\bibitem{MaresNovotonyJex}
Jan Mareš, Jaroslav Novotný, Igor Jex,
Percolated quantum walks with a general shift operator: From trapping to transport,
Phys.Rev.A.99 (2019) 042129.

\bibitem{Portugal} Portugal, R. (2018) Quantum Walks and Search Algorithms (2nd ed.). Springer.

\bibitem{Shenvi:Kempe:Whaley} Shenvi, N., Kempe, J., and Whaley, K. B. (2003) Quantum random-walk search algorithm. \textit{Phys. Rev. A, 67(5)}, 052307.





\end{thebibliography}
\end{document}